\let\c@author\relax\makeatother
\let\R\relax
\newclass{\LLT}{LLT}
\newclass{\LTTSS}{LTTSS}
\newclass{\LTT}{LTT}
\newclass{\LT}{LT}
\newclass{\RTIME}{RTIME}
\newclass{\SLT}{SLT}
\newlang{\LLin}{LLin}
\newlang{\Th}{Th}
\title{Sublinear-Time Probabilistic Cellular Automata}
\newcommand{\myname}{Augusto Modanese}
\newcommand{\mynameabbr}{A.~Modanese}
\newcommand{\myaffil}{Aalto University, Espoo, Finland}
\author{\myname}{\myaffil}{augusto.modanese@aalto.fi}{}{}
\authorrunning{\mynameabbr}
\keywords{Cellular automata, local computation, probabilistic models, subregular
language classes}
\begin{document}

\maketitle

\begin{abstract}
  We propose and investigate a probabilistic model of sublinear-time
  one-dimensional cellular automata.
  In particular, we modify the model of ACA (which are cellular automata that
  accept if and only if all cells simultaneously accept) so that every cell
  changes its state not only dependent on the states it sees in its neighborhood
  but also on an unbiased coin toss of its own.
  The resulting model is dubbed \emph{probabilistic ACA} (PACA).
  We consider one- and two-sided error versions of the model (in the same spirit
  as the classes $\RP$ and $\BPP$) and establish a separation between the
  classes of languages they can recognize all the way up to $o(\sqrt{n})$ time.
  As a consequence, we have a $\Omega(\sqrt{n})$ lower bound for derandomizing
  constant-time two-sided error PACAs (using deterministic ACAs).
  We also prove that derandomization of $T(n)$-time PACAs (to polynomial-time
  deterministic cellular automata) for various regimes of $T(n) = \omega(\log
  n)$ implies non-trivial derandomization results for the class $\RP$ (e.g., $\P
  = \RP$).
  The main contribution is an almost full characterization of the constant-time
  PACA classes:
  For one-sided error, the class equals that of the deterministic model; that
  is, constant-time one-sided error PACAs can be fully derandomized with only a
  constant multiplicative overhead in time complexity.
  As for two-sided error, we identify a natural class we call the \emph{linearly
  testable languages} ($\LLT$) and prove that the languages decidable by
  constant-time two-sided error PACAs are \enquote{sandwiched} in-between the
  closure of $\LLT$ under union and intersection and the class of locally
  threshold testable languages ($\LTT$).
\end{abstract}



\section{Introduction}

Cellular automata (CAs) have been extensively studied as a natural model of
distributed computation.
A one-dimensional CA is composed of a row of fairly limited computational
agents---the \emph{cells}---which, by interacting with their immediate
neighbors, realize a global behavior and work towards a common goal.

As every model of computation, CAs have been widely studied as language
acceptors \cite{kutrib09_cellular_ecss, terrier12_language_hnc}.
These efforts apparently were almost exclusively devoted to the linear- or
real-time case---to the detriment of the \emph{sublinear-time} one
\cite{modanese21_sublinear-time_ijfcs}.
This is unfortunate since, as it was recently shown in
\cite{modanese21_lower_csr}, the study of sublinear-time CA variants might help
better direct efforts in resolving outstanding problems in computational
complexity theory.

In this work, we consider a \emph{probabilistic} sublinear-time CA model.
Our main goal is to analyze to what extent---if at all---the addition of
randomness to the model is able to make up for its inherent limitations.
(For instance, sublinear-time CA models are usually restricted to a local view
of their input \cite{modanese21_sublinear-time_ijfcs} and are also unable to
cope with long unary subwords \cite{modanese21_lower_csr}.)

\subsection{The Model}
\label{sec_the_model}

We consider only bounded one-dimensional cellular automata.

\begin{definition}[Cellular automaton]%
  \label{def_ca}%
  A \emph{cellular automaton} (CA) is a triple $C = (Q,\$,\delta)$ where
  $Q$ is the finite set of \emph{states},
  $\$ \notin Q$ is the \emph{boundary symbol},
  and $\delta\colon Q_\$ \times Q \times Q_\$ \to Q$ is the \emph{local
    transition function}, where $Q_\$ = Q \cup \{ \$ \}$.
  The elements in the domain of $\delta$ are the possible \emph{local
    configurations} of the cells of $C$.
  For a fixed width $n \in \N_+$, the \emph{global configurations} of $C$ are
  the elements of $Q^n$.
  The cells $0$ and $n-1$ are the \emph{border cells} of $C$.
  The \emph{global transition function} $\Delta\colon Q^n \to Q^n$ is obtained
  by simultaneous application of $\delta$ everywhere; that is, if $s \in Q^n$
  is the current global configuration of $C$, then
  \[
    \Delta(s) = \delta(\$,s_0,s_1)
      \, \delta(s_0,s_1,s_2)
      \, \cdots
      \, \delta(s_{n-2},s_{n-1},\$).
  \]
  For $t \in \N_0$, $\Delta^t$ denotes the $t$-th iterate of $\Delta$.
  For an initial configuration $s \in Q^n$, the sequence $s = \Delta^0(s),
  \Delta(s), \Delta^2(s), \dots$ is the \emph{orbit} of $C$ (for $s$).
  Writing the orbit of $C$ line for line yields its \emph{space-time diagram}.
\end{definition}

One key theme connecting CAs and models of physics is \emph{causality}:
If two cells $i$ and $j$ are $t$ cells away from each other, then $j$ requires
at least $t$ steps to receive any information from $i$.
In the \emph{sublinear-time} case, this means every cell only gets to see a very
small section of the input.
In some sense this is reminiscent of \emph{locality} in circuits (e.g.,
\cite{yao89_circuits_stoc}), though locality in the CA model carries a more
literal meaning since it is connected to the notion of space (whereas in
circuits there is no equivalent notion).
One should keep this limitation (of every cell only seeing a portion of the
input) in mind as it is central to several of our arguments.

The usual acceptance condition for CA-based language recognizers is that of a
distinguished cell (usually the leftmost one) entering an accepting state
\cite{kutrib09_cellular_ecss}.
This is unsuitable for sublinear-time computation since then the automaton is
limited to verifying prefixes of a constant length
\cite{modanese21_sublinear-time_ijfcs}.
The most widely studied \cite{modanese21_sublinear-time_ijfcs,
  ibarra85_fast_tcs, sommerhalder83_parallel_ai, kim90_characterization_pd}
acceptance condition for sublinear-time is that of all cells simultaneously
accepting, yielding the model of \emph{ACA} (where the first \enquote{A} in the
acronym indicates that \emph{all} cells must accept).

\begin{definition}[DACA]%
  \label{def_aca}%
  A \emph{deterministic ACA} (DACA) is a CA $C$ with an \emph{input alphabet}
  $\Sigma \subseteq Q$ as well as a subset $A \subseteq Q$ of \emph{accepting
  states}.
  We say $C$ \emph{accepts} an input $x \in \Sigma^+$ if there is $t \in \N_0$
  such that $\Delta^t(x) \in A^n$, and we denote the set of all such $x$ by
  $L(C)$.
  In addition, $C$ is said to have \emph{time complexity} (bounded by) $T\colon
  \N_+ \to \N_0$ if, for every $x \in L(C) \cap \Sigma^n$, there is $t <
  T(\abs{x})$ such that $\Delta^t(x) \in A^n$.
\end{definition}

We propose a probabilistic version of the ACA model inspired by the stochastic
automata of \cite{arrighi13_stochastic_fi} and the definition of probabilistic
Turing machines (see, e.g., \cite{arora09_computational_book}).
In the model of \emph{probabilistic ACA} (PACA), at every step, each cell tosses
a fair coin $c \in \binalph$ and then changes its state based on the outcome of
$c$.
There is a nice interplay between this form of randomness access and the overall
theme of \emph{locality} in CAs:
Random events pertaining to a cell $i$ depend exclusively on what occurs in the
vicinity of $i$.
Furthermore, events corresponding to distinct cells $i$ and $j$ can only be
dependent if $i$ and $j$ are near each other; otherwise, they are necessarily
independent (see \cref{lem_independence}).

We consider both one- and two-sided error versions of the model as natural
counterparts of $\RP$ and $\BPP$ machines, respectively.
Although PACAs are a conceptually simple extension of ACAs, the definition
requires certain care, in particular regarding
the model's time complexity.
To see why, recall that, in deterministic ACA (DACA), time complexity of an
automaton $C$ is defined as the upper bound on the number of steps that $C$
takes to accept an input in its language $L(C)$.
In contrast, in a PACA $C'$ there may be multiple computational branches
(depending on the cells' coin tosses) for the same input $x \in L(C')$, and it
may be the case that there is no upper bound on the number of steps for a branch
starting at $x$ to reach an accepting configuration.
In non-distributed models such as Turing machines, these pathological cases can
be dealt with by counting the number of steps computed and stopping if this
exceeds a certain bound.
In PACA, that would either require an extrinsic global agent that informs the
cells when this is the case (which is undesirable since we would like a strictly
distributed model) or it would need to be handled by the cells themselves, which
is impossible in sublinear-time in general (since the cells cannot directly
determine the input length).
We refer to \cref{sec_fundamentals} for the formal definitions and further
discussion.

Finally, we should also mention our model is more restricted than a
\emph{stochastic CA},\footnote{%
  Unfortunately, the literature uses the terms stochastic and probabilistic CA
  interchangeably.
  We deem \enquote{probabilistic} more suitable since it is intended as a CA
  version of a probabilistic Turing machine.}
which is a CA in which the next state of a cell is chosen according to an
arbitrary distribution that depends on the cell's local configuration.
For a survey on stochastic CAs, we refer to \cite{mairesse14_around_tcs}.

\subsection{Results}

\subparagraph{Inclusion relations.}
As can be expected, two-sided error PACAs are more powerful than their one-sided
error counterparts.
Say a DACA $C$ is \emph{equivalent} to a PACA $C'$ if they accept the same
language (i.e., $L(C) = L(C')$).

\begin{theorem}[restate=restatethmOneVsTwosided,name=]%
  \label{thm_one_vs_twosided}%
  The following hold:
  \begin{enumerate}
    \item If $C$ is a one-sided error PACA with time complexity $T$, then there
    is an equivalent two-sided error PACA $C'$ with time complexity $O(T)$.
    \item There is a language $L$ recognizable by constant-time two-sided error
    PACA but not by any $o(\sqrt{n})$-time one-sided error PACA.
  \end{enumerate}
\end{theorem}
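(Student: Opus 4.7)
I would show this by sequential probability amplification. A one-sided error PACA $C$ with time complexity $T$ has zero soundness error, so it suffices to boost its completeness above the two-sided error threshold. The plan is to construct $C'$ by sequentially simulating $k = O(1)$ independent copies of $C$: each cell of $C'$ maintains an internal clock and, for $jT \le t < (j{+}1)T$, runs the $j$-th copy of $C$ on its local input (each copy starts fresh at time $jT$ and uses independent coins, since the PACA's tosses at different steps are independent). The combined acceptance condition $\exists t \le kT$ such that all cells are in $A$ at time~$t$ is met iff at least one of the $k$ copies accepts. For $x \in L(C)$ this event has probability $\ge 1 - (1-p)^k$, which for suitable constant~$k$ exceeds the two-sided threshold; for $x \notin L(C)$ the event has probability zero. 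The total time is $kT = O(T)$.

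\textbf{Part 2.}
I would exhibit a concrete language $L$ and prove both a constant-time two-sided error upper bound and an $\Omega(\sqrt{n})$ one-sided error lower bound. For the upper bound, $L$ should be chosen so that membership hinges on a property in $\LTT$ (but outside the relevant deterministic/one-sided class) that can be verified in expectation by randomly sampling the input: each cell uses its coin toss to probabilistically ``test'' whether its constant-size window exhibits the relevant feature, with thresholds calibrated so that $\Pr[\text{all cells accept}] \ge 2/3$ when $x \in L$ and $\le 1/3$ otherwise, exploiting the overlap of neighbouring windows to amplify local signals into a global decision.

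For the lower bound I would argue by contradiction. Suppose a one-sided error PACA $C$ decides $L$ in time $T(n) = o(\sqrt{n})$. I would construct, for infinitely many~$n$, an indistinguishable pair $(x_n, y_n)$ with $x_n \in L$ and $y_n \notin L$, such that every window of radius $T(n)$ occurring in $y_n$ also occurs in $x_n$. Such pairs exist when $L$ is built around $\Omega(\sqrt{n})$ ``witness'' features placed in $x_n$ at pairwise distance $\Omega(\sqrt{n})$, so that $y_n$ (obtained by removing or altering some feature) has no length-$(2T{+}1)$ local view distinguishable from one already appearing in $x_n$. Since $C$ rejects $y_n$ with probability one, for every random string some cell of $y_n$ fails to enter $A$ at every $t \le T(n)$; the matching cell of $x_n$ has the identical distribution over local histories (same window, same window-restricted coins) and hence also fails, which, together with independence of cells whose windows are disjoint, yields a random string on which $x_n$ is also rejected, contradicting $x_n \in L$.

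The main obstacle is the joint engineering of $L$ and the family $(x_n, y_n)$: $L$ must be expressive enough that no $o(\sqrt{n})$-time one-sided local procedure decides it, yet structured enough that a constant-time sampling procedure with two-sided error succeeds. The $\sqrt{n}$ threshold emerges from a packing argument---one can place $\Theta(\sqrt{n})$ distinguishing features in a length-$n$ string at mutual distance $\Omega(\sqrt{n})$, which is exactly enough that no window of radius $o(\sqrt{n})$ sees more than one feature and hence cannot locally distinguish $x_n$ from $y_n$. Balancing this packing against the cell's view radius is the core quantitative step in the argument.
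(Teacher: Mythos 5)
Your sequential amplification has a gap that the paper explicitly flags: the schedule ``for $jT \le t < (j{+}1)T$ run the $j$-th copy'' requires every cell to know the value $T(n)$ so that all cells synchronously abandon copy $j$ and restart copy $j{+}1$ at the same global time. The theorem is stated for arbitrary (non-constant, sublinear) $T$, and a cell of a sublinear-time PACA cannot determine $n$, hence cannot compute $T(n)$, and there is no global agent to signal the restarts. The paper's construction (\cref{prop_onesided_error_reduction}) avoids this entirely by running the $k$ copies \emph{in parallel, interleaved round-robin}: step $kt+i+1$ of $C'$ simulates step $t$ of copy $C_i$, so no cell ever needs to know $T(n)$. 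With that replacement the rest of your analysis (zero soundness error is preserved, completeness becomes $1-(1-p)^k$, time $O(T)$) goes through as you describe.

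\textbf{Part 2.} Your lower-bound argument runs in the wrong direction. You derive ``a random string on which $x_n$ is rejected'' and declare a contradiction, but $x_n \in L$ only guarantees acceptance probability at least $1/2$, so a single rejecting random string contradicts nothing. One-sidedness must be exploited the other way: bad words have acceptance probability \emph{exactly zero}, so it suffices to exhibit a single random string under which a bad word is accepted, i.e., you must transfer \emph{acceptance} from good words to a bad word, not rejection from a bad word to a good one. This is what the paper does with the concrete language $L = \{ x \in \{0,1\}^+ : |x|_1 \le 1\}$: assuming all singleton words $x_i = 0^{i-1}10^{n-i}$ are in $L(C)$, a pigeonhole over random inputs yields one $R$ accepting at least $n/2$ of them, an averaging over the at most $T$ possible acceptance steps yields one step $t$ at which at least $n/(2T)$ of them are accepted under $R$, and the condition $n/(2T) \ge 2T \ge 2t+2$ --- this, not a packing of features inside a single word, is where $T \le \sqrt{n}/2$ enters --- produces indices $j \ge i+2t+1$ such that every $t$-neighborhood of the bad word $x^\ast = 0^{i-1}10^{j-i-1}10^{n-j} \notin L$ occurs in $x_i$ or $x_j$; hence $C(x^\ast,R)=1$, contradicting zero soundness error. (Note you need a \emph{common} $R$ and a \emph{common} $t$ for two well-separated good words, which is exactly what the pigeonhole and averaging provide; your sketch omits this.) Your constant-time two-sided upper bound is the right idea and matches the paper's in spirit (each $1$-cell accepts at a random delay in $\{1,\dots,4\}$, so two or more $1$s agree with probability at most $1/4$), but as written the proposal neither commits to a concrete $L$ nor contains a sound lower-bound argument.
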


We stress the first item does not follow immediately from the definitions since
it requires error reduction by a constant factor, which requires a non-trivial
construction.
It remains open whether in the second item we can improve the separation from
$o(\sqrt{n})$ to $o(n)$ time.
Nevertheless, as it stands the result already implies a lower bound of
$\Omega(\sqrt{n})$ time for the derandomization (to a DACA) of constant-time
two-sided error PACA.

Another result  we show is how time-efficient derandomization of PACA classes
imply derandomization results for $\RP$ (with a trade-off between the PACA time
complexity and the efficiency of the derandomization).

\begin{theorem}[restate=restatethmPACAEqACA,name=]%
  \label{thm_PACA_eq_ACA}%
  Let $d \ge 1$. The following hold:
  \begin{itemize}
    \item If there is $\eps>0$ such that every $n^\eps$-time (one- or two-sided
    error) PACA can be converted into an equivalent $n^d$-time deterministic
    CA, then $\P = \RP$.
    \item If every $\polylog(n)$-time PACA can be converted into an equivalent
    $n^d$-time deterministic CA, then, for every $\eps>0$, $\RP \subseteq
    \TIME[2^{n^\eps}]$.
    \item If there is $b > 2$ so that any $(\log n)^b$-time PACA can be
    converted into an equivalent $n^d$-time deterministic CA, then, for every $a
    \ge 1$ and $c > a/(b-1)$, $\RTIME[n^a] \subseteq \TIME[2^{O(n^c)}]$.
  \end{itemize}
\end{theorem}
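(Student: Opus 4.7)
The plan is to reduce derandomization of randomized time classes to the hypothesized derandomization of PACAs via a padding argument, using the same skeleton for all three items. Given a language $L \in \RTIME[t(n)]$ with randomized Turing machine $M$, the strategy is to exhibit a padded variant $L' = \{ x \bullet \mathtt{0}^{N(|x|)-|x|-1} : x \in L \}$ recognizable by a one-sided error PACA $P$ of time $T(N)$, apply the hypothesis of the item to convert $P$ into an equivalent $N^d$-time deterministic CA, simulate this CA by a deterministic Turing machine in $N^{O(1)}$ time, and unpad to obtain a deterministic algorithm for $L$ of time $N(n)^{O(1)}$. The three items differ only in the choice of $N(n)$, tuned so that $T(N)$ lies in the PACA class allowed by the hypothesis while $N(n)^{O(1)}$ lies in the target complexity class.

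The technical core is a \emph{Simulation Lemma} asserting that the $L'$ above is recognizable by a one-sided error PACA $P$ of time $T(N) = O(t(n) \cdot \log N)$. The intended construction is as follows: cells whose local neighborhood contains only padding symbols enter a designated accepting state after one step; cells near the boundary marker $\bullet$ or near $x$ cooperate in a standard distributed CA simulation of $M$, with the $O(t(n))$ cells immediately to the right of $\bullet$ serving as the simulated tape, the TM head travelling among them at unit speed, and each simulated step drawing its random bit from the coin toss of the cell currently hosting the head. When $M$ halts, a constant-speed broadcast signal synchronizes all participating cells into the global accepting configuration iff the simulation accepted. The $\log N$ factor in $T(N)$ accounts for the constant-alphabet CA's overhead per simulated TM step and for the length of the broadcast. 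One-sided error is literal: for $x \notin L$, no coin toss sequence can cause $M$ to accept, so the all-accept criterion is never met on any computation branch.

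Given the lemma, the three items follow by plugging in the padding. For item~(i), given $L \in \RTIME[n^a]$, set $N(n) = n^{\lceil a/\eps\rceil + 1}$, so that $T(N) = O(n^a \log n) \le N^\eps$; the hypothesis then yields a deterministic CA of time $N^d = n^{O(1)}$, hence $L \in \P$, and varying $a$ gives $\P = \RP$. For item~(ii), given $\eps > 0$ and $L \in \RTIME[n^a]$, set $N(n) = 2^{n^\eps}$, so that $T(N) = O(n^{a+\eps}) = \polylog(N)$; the hypothesis yields a deterministic CA of time $N^d = 2^{O(n^\eps)}$, hence $L \in \TIME[2^{O(n^\eps)}]$. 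For item~(iii), given $c > a/(b-1)$, set $N(n) = 2^{n^c}$, so that $T(N) = O(n^{a+c}) \le (\log N)^b = n^{cb}$, the last inequality being equivalent to $c(b-1) > a$; the hypothesis then yields a deterministic CA of time $N^d = 2^{dn^c}$, giving $\RTIME[n^a] \subseteq \TIME[2^{O(n^c)}]$.

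The hard part is the Simulation Lemma itself. Designing $P$ requires reconciling three tensions: cells deep in the padding must default-accept without ever mimicking the simulation role, which forces an input alphabet disjoint from the padding symbol; the all-cells-accept criterion demands a global broadcast of the simulation outcome that must fit within the time budget; and the one-sided error must be strict (zero acceptance probability on $x \notin L$), so no coin toss sequence may spuriously drive the simulation region into acceptance. Quantitatively, keeping the per-simulated-step overhead down to $O(\log N)$ rather than $\polylog(N)$ is exactly what produces the sharp $(b-1)$ denominator in item~(iii); any slack in this step would replace $b-1$ by a smaller exponent and weaken the conclusion.
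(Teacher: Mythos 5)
Your overall skeleton (pad, simulate the randomized TM by a one-sided error PACA, apply the hypothesized derandomization, unpad) and your parameter choices for the three items match the paper's reduction via its Proposition on $\RTIME[p(n)]$. However, there is a genuine gap in your Simulation Lemma, and it is precisely the point the paper flags as the crux of the argument: the naive suffix padding $x' = x \bullet 0^{N(|x|)-|x|-1}$ cannot be used. A sublinear-time PACA cannot verify the length of the padding. Concretely, every local neighborhood occurring in $x \bullet 0^{m}$ for small $m$ also occurs in $x \bullet 0^{N-|x|-1}$, so (by the same local-indistinguishability argument the paper uses in the proof of \cref{thm_one_vs_twosided}) if all cells accept the correctly padded word at some step $t$, all cells also accept the under-padded word at step $t$. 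Your PACA must accept the correctly padded $x'$ for $x \in L$, and the simulation forces that acceptance to happen only after $\Omega(t(n))$ steps; hence it also accepts $x \bullet 0^{0}$, an input of length $n+1$, only after $\Omega(t(n))$ steps. Since a PACA's time complexity is defined as a function of the length of \emph{every} input on which some computation accepts, this means your automaton is not an $N^{\eps}$-time (or $\polylog(N)$-time) PACA at all --- its running time on the short inputs is super-polynomial in their length. The hypothesis of each item therefore does not apply to it, and the reduction collapses.

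The paper's fix is to make the padding \emph{locally verifiable in parallel}: the padded instance is split into $h(n)$ blocks of the form $\bin_n(i)\#x_i\#0^{p(n)}$ separated by delimiters, where adjacent blocks check that their index fields are consecutive, the leftmost block checks its index is $\bin_n(0)$, and the rightmost checks it is $\bin_n(h(n)-1)$. This forces the number of blocks, and hence the total input length, using only $O(\poly(n))$-time local communication, so that every malformed input is rejected quickly and the automaton genuinely has the small time complexity required as a function of its own input length. (It also necessitates reducing the error of the underlying $\RP$ machine to below $1/2h(n)$ so a union bound over the blocks works --- which is where the paper's $\log h(n)$ factor comes from; your $\log N$ overhead is not needed for the TM simulation itself.) Without this block structure, or some equivalent device for certifying the input length locally, your Simulation Lemma is false as stated.
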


We deliberately write \enquote{deterministic CA} instead of \enquote{DACA}
since, for $T(n) = \Omega(n)$, a $T$-time DACA is equivalent to an $O(T)$-time
deterministic CA with the usual acceptance condition
\cite{modanese21_sublinear-time_ijfcs}.

\subparagraph{Characterization of constant time.}
As a first step we analyze and almost completely characterize constant-time
PACA.
Indeed, the constant-time case is already very rich and worth considering in and
of itself.
This may not come as a surprise since other local computational models (e.g.,
local graph algorithms \cite{suomela13_survey_acmcs}) also exhibit behavior in
the constant-time case that is far from trivial.

In \cref{appx_example_daca_vs_paca} we give an example of a one-sided error PACA
that recognizes a language $L$ strictly faster than any DACA for $L$.
Nonetheless, as we prove, one-sided error PACA can be derandomized with only a
constant multiplicative overhead in time complexity.

\begin{theorem}[restate=restatethmCharOnesided,name=]%
  \label{thm_char_onesided_PACA}%
  For any constant-time one-sided error PACA $C$, there is a constant-time DACA
  $C'$ such that $L(C) = L(C')$.
\end{theorem}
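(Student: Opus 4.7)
The plan is to construct, for a given constant-time one-sided error PACA $C$ (time bound $T$, acceptance threshold $\alpha > 0$), an equivalent constant-time DACA $C'$. Define $\pi(x) := \Pr[\exists t \le T : \Delta^t(x) \in A^n]$; the one-sided condition forces $\pi(x) \in \{0\} \cup [\alpha, 1]$, so $L(C) = \{x : \pi(x) > 0\}$. Using $\pi(x) > 0 \iff \exists t : \Pr[\Delta^t(x) \in A^n] > 0$, this decomposes as
\[
  L(C) \;=\; \bigcup_{t=0}^{T} L_t, \qquad L_t := \{x : \Pr[\Delta^t(x) \in A^n] > 0\}.
\]
Constant-time DACAs are closed under finite unions via the standard product construction, so it suffices to recognize each $L_t$ by a constant-time DACA.

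For a fixed $t$, let $q_t(w)$ denote the marginal probability that a cell with input window $w$ (radius $t$, possibly truncated by $\$$) lies in $A$ at time $t$. Since $T$ is constant, $q_t(w)$ is a dyadic rational with denominator dividing $2^{O(T^2)}$, so $q_t$ takes only finitely many values. The crux of the argument is the \textbf{Key Claim}: for every interior window $w$ (one not touching $\$$) and every $t \le T$, $q_t(w) \in \{0, 1\}$. Granted this, $C'_t$ is the DACA that simulates $C$ for $t$ steps and accepts iff every interior cell's window lies in the precomputed finite set $\{w : q_t(w) = 1\}$ and every boundary window lies in $\{w : q_t(w) > 0\}$; inputs below some constant length $n_0$ are handled by hardcoding in the state set. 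The product DACA of the $C'_t$'s is the desired $C'$.

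The Key Claim is the hard part, and I would establish it by contradiction using \cref{lem_independence}. Suppose $q_t(w^*) \in (0, 1)$ for some interior $w^*$. Starting from a witnessed accepting input $y \in L(C)$ (the case $L(C) = \emptyset$ is trivial), extract a long "clean" padding substring $\phi$ from $y$: a region in which every interior cell deterministically accepts at the time $t_y$ witnessing $y$'s acceptance. Then build inputs $x^{(m)}$ featuring $m$ copies of $w^*$ at mutual distance $> 2T$, interleaved with $\phi$. By \cref{lem_independence}, the $w^*$-copies' acceptance events at time $t$ become mutually independent, yielding $\Pr[\Delta^t(x^{(m)}) \in A^n] \le q_t(w^*)^m \to 0$; with careful accounting across the other times $t' \ne t$---likely requiring induction on $t$ and a systematic choice of padding that "zeroes out" contributions from competing $t'$---one forces $\pi(x^{(m)}) \in (0, \alpha)$, contradicting the one-sided dichotomy $\pi \in \{0\} \cup [\alpha, 1]$.

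The main obstacle is this padding-and-accounting construction, and in particular ensuring that padding does not rescue $\pi$ above $\alpha$ through a competing time $t' \ne t$. This is precisely where one-sided error is essential: a two-sided PACA tolerates $\pi \in (0, \alpha)$ without contradiction, and indeed the two-sided constant-time case requires the strictly richer characterization (sandwiched between $\LLT$ closure and $\LTT$) established later in the paper.
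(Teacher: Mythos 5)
There is a genuine gap: your Key Claim is false, and the paper's own \cref{ex_daca_vs_paca} refutes it. There the one-sided error PACA for $L = \{0^k1^l2^m3^n \mid \dots\}$ has a cell $c_1$ (the leftmost cell carrying a $1$) that accepts if and only if its right neighbor's coin toss equals $1$, so for the input $x = 0^51^32^23^5 \in L(C)$ the interior window $w$ of $c_1$ has marginal acceptance probability $q_t(w) = 1/2 \notin \{0,1\}$. Your automaton $C'_t$ would therefore reject $x$ at every candidate step $t$, so the union of the $C'_t$ rejects a word of $L(C)$. One-sidedness bounds the \emph{number} of such \enquote{critical} cells on accepted inputs (\cref{lem_critical_cells}), but it does not force interior marginals to be $0$ or $1$. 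Your proposed proof of the Key Claim breaks precisely at the step you flag as delicate: when you interleave $m$ copies of $w^\ast$ with clean padding $\phi$ extracted from $y$, the junction windows between $\phi$ and $w^\ast$ need not occur in any accepted word, and a single deterministically rejecting junction cell forces $\pi(x^{(m)}) = 0$ exactly --- which is consistent with one-sidedness and yields no contradiction. In the example this is exactly what happens: any padding taken from $y$ is a unary run, and gluing copies of $00111$ into it creates forbidden infixes that the PACA rejects with probability $1$.

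The paper's proof keeps your overall shape (a large constant-radius lookaround, a decision per time step $t$) but replaces the per-cell $0/1$ dichotomy with a cluster-level existential test. Using \cref{lem_critical_cells} it bounds the number of critical cells by a constant $M$ on accepted inputs; each critical cell $i$ computes the transitive closure $B_i$ of critical cells within chains of distance $2(T-1)$ (all of which fit in a radius-$(2M-1)(T-1)$ neighborhood) and accepts iff \emph{some} assignment of random bits makes all of $B_i$ accept at step $t$; non-critical cells behave deterministically, and a cell seeing more than $M$ critical cells rejects outright. Correctness for $x \notin L(C)$ then follows from \cref{lem_independence}: if every cluster were jointly satisfiable, independence across clusters would give a positive acceptance probability, contradicting one-sidedness. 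To repair your argument you would need to replace $q_t(w) \in \{0,1\}$ for single windows by this weaker, but true, statement about joint satisfiability of dependency clusters.
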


In turn, the class of languages accepted by constant-time two-sided error PACA
can be considerably narrowed down in terms of a novel subregular class $\LLT$,
dubbed the \emph{locally linearly testable} languages.
Below, $\LLT_{\cup\cap}$ is the closure over $\LLT$ under union and intersection
and $\LTT$ its Boolean closure (i.e., its closure under union, intersection, and
complement).

\begin{theorem}[restate=restatethmCharTwosided,name=]%
  \label{thm_char_twosided_PACA}%
  The class of languages that can be accepted by a constant-time two-sided error
  PACA contains $\LLT_{\cup\cap}$ and is strictly contained in $\LTT$.
\end{theorem}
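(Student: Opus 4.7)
The plan splits naturally into the two containments. For the lower bound $\LLT_{\cup\cap} \subseteq$ (constant-time two-sided error PACA), I would first show that every single $\LLT$ language is accepted by such a PACA by directly implementing its linear-testability predicate: each cell reads its constant-sized window and produces a local indicator bit whose expectation (over the cell's coin flips) realizes the linear threshold built into the definition of $\LLT$. Closure of the PACA class under union and intersection is then obtained by running two PACAs in parallel on disjoint coin tapes and combining the per-cell accept bits via OR (respectively AND). Since the naive combination degrades the error (e.g., $2/3 \cdot 2/3 = 4/9$ for intersection), one first amplifies each subroutine's success probability to $1 - \eps$ for a small constant $\eps$ by running $O(1)$ independent copies and taking a per-cell majority vote; this is feasible within constant time.

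For the upper inclusion into $\LTT$, I would exploit the strictly local nature of constant-time PACAs together with \cref{lem_independence}. After $t = O(1)$ steps, each cell's state is a function only of the input factor of length $2t+1$ centered at it together with the coin flips within the same window; hence its acceptance probability $p_u$ depends only on this factor $u$. By the independence lemma, cells whose windows are pairwise disjoint have independent acceptance events, so a tiling of the input by such disjoint windows yields an upper bound on the joint acceptance probability of the form $\prod_{u} p_u^{m_u(w)}$, where $m_u(w)$ counts how many windows in the tiling equal $u$. For this product to lie above $2/3$ on accepting inputs yet below $1/3$ on rejecting ones, each factor $u$ with $p_u < 1$ can appear only up to a bounded number of times; above that threshold the product becomes too small. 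Hence the decision boundary reduces to a Boolean combination of threshold conditions on factor counts, i.e., an $\LTT$ predicate.

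To establish strictness, I would exhibit a concrete separating language, the canonical candidate being the language $L$ of binary words containing exactly one occurrence of $1$, which lies in $\LTT$ since it is a threshold condition on the count of $1$s. Assume for contradiction that a constant-time two-sided error PACA accepts $L$. Cells whose window is entirely $0$ behave identically on the input $0^n$ (which must be rejected) and on all but $O(1)$ cells of an input $0^i 1 0^j \in L$. Rejection of $0^n$, together with independence across disjoint all-zero windows, forces the per-cell acceptance probability of an all-zero-window cell to be strictly less than $1$; but then on long inputs in $L$ a linear-size subset of all-zero-window cells fit in pairwise disjoint windows, and the joint acceptance probability shrinks geometrically in the input length, contradicting acceptance with probability $\ge 2/3$. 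The main technical obstacle I anticipate lies in the upper-bound direction, where the product inequality must be set up carefully using a maximal collection of pairwise disjoint windows covering a constant fraction of the cells, and where corner cases such as $p_u = 0$ (a single forbidden window causing rejection) must be handled uniformly.
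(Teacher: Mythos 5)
Your first inclusion follows essentially the paper's route (each cell accepts with probability $2^{-\alpha(m)}$ so that the conjunction over all cells has probability $2^{-\sum_m \alpha(m)\abs{w}_m}$, plus closure under union and intersection), though note that realizing these probabilities with constantly many coins and guaranteeing a constant gap around the threshold---without which the amplification step has nothing to amplify---requires normalizing the weights first (\cref{lem_simpl_LLin}), and that the AND-combination for intersection must cope with the two automata accepting at different time steps. The serious gaps are elsewhere. In the upper bound, your tiling argument only yields an \emph{upper bound} $\prod_u p_u^{m_u(w)}$ on the acceptance probability, which gives a necessary condition for membership but not a characterization: a word can satisfy every factor-count threshold and still be rejected, e.g.\ because two critical cells at distance at most $2(T-1)$ are anti-correlated (see \cref{appx_example_daca_vs_paca}), or because the cells never all accept at a \emph{common} step. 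The paper needs two further ideas that are absent from your sketch: (i) the bound on the number of critical cells (\cref{lem_critical_cells}) is used to group correlated critical cells into clusters of constant diameter and to choose infixes long enough that each occurrence isolates one whole cluster, so that the acceptance probability at a fixed step is \emph{exactly} a product over infix occurrences; and (ii) acceptance is the union over $t < T$ of the events \enquote{all cells accept at step $t$}, and turning this union into an $\LTT$ predicate requires inclusion-exclusion over subsets of time steps together with closure of $\LTT$ under complement. Without (ii) your argument only covers PACAs that accept at one fixed step, and without (i) you never obtain the \enquote{if} direction of the $\LTT$ membership test.

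For strictness, your deduction runs in the wrong direction. Rejection of $0^n$ does \emph{not} force an all-zero-window cell to accept with probability $<1$; it only forces, at every step, \emph{some} cell to fail, and that cell may well be one of the $O(T)$ border cells, to which your geometric-decay argument does not apply. The correct inference is the opposite one: acceptance of $0^i10^j \in L$, which contains linearly many pairwise independent all-zero-window cells, forces those cells to accept with probability exactly $1$ at every step at which acceptance occurs with positive probability (otherwise the very product you describe would vanish on the \emph{positive} instance). One must then argue that the remaining cell types of $0^n$ cannot distinguish $0^n$ from $0^i10^j$ and hence $0^n$ is accepted with probability too large---and this requires coupling the border cells' behavior across all time steps simultaneously, not a per-cell marginal bound. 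The paper sidesteps this by using $\{w \in \binalph^+ : \abs{w}_1 \ge 2\}$ and pumping a critical cell of type $0^{m_1}10^{m_2}$ obtained from the pair $0^n10^n \notin L$, $0^n10^n1 \in L$. Your candidate language may still separate, but not by the argument as written.
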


It is known that the constant-time class of DACA equals the closure under union
$\SLT_\cup$ of the strictly local languages $\SLT$
\cite{sommerhalder83_parallel_ai}.
(We refer to \cref{sec_char_twosided} for the definitions.)
Since $\SLT_\cup \subsetneq \LLT_\cup$ is a proper inclusion, this gives a
separation of the deterministic and probabilistic classes in the case of
two-sided error and starkly contrasts with \cref{thm_char_onesided_PACA}.

\subparagraph{The class $\LLT$.}

The languages in $\LLT$ are defined based on sets of allowed prefixes and
suffixes (as, e.g., the languages in $\SLT$) together with a \emph{linear
  threshold} condition (hence their name):
For the infixes $m$ of a fixed length $\ell \in \N_+$ there are coefficients
$\alpha(m) \in \R_0^+$ as well as a threshold $\theta \ge 0$ such that every
word $w$ in the language satisfies the following:
\[
  \sum_{m \in \Sigma^\ell} \alpha(m) \cdot \abs{w}_m \le \theta,
\]
where $\abs{w}_m$ is the number of occurrences of $m$ in $w$.

In \cref{sec_char_twosided} we show $\LLT$ lies in-between $\SLT_\cup$ and
the class of locally threshold testable languages $\LTT$.
In this regard $\LLT$ is similar to the class $\LT$ of locally testable
languages; however, as we can also show, both $\LLT$ and $\LLT_\cup$ are
incomparable to $\LT$.
The relation between $\LLT_{\cup\cap}$ and $\LT$ is left as a topic for future
work.

As the classes $\SLT$, $\LT$, and $\LTT$ (see, e.g.,
\cite{beauquier91_languages_tcs}), $\LLT$ may also be characterized in terms of
\emph{scanners}, that is, memoryless devices that process their input by passing
a sliding window of $\ell$ symbols over it.
Namely, the class $\LLT$ corresponds to the languages that can be recognized by
scanners possessing a \emph{single counter} $c$ with maximum value $\theta$; the
counter $c$ is incremented by $\alpha(m)$ for every infix $m \in \Sigma^\ell$
read, and the scanner accepts if and only if $c \le \theta$ holds at the end of
the input (and the prefix and suffix of the input are also allowed).

A related restriction of the $\LTT$ languages that we should mention is that of
the locally threshold testable languages in the strict sense ($\LTTSS$)
\cite{ruiz98_locally_icgi, garcia03_threshold_tcm}.
The key difference between these languages and our class $\LLT$ is that, in the
former, one sets a threshold condition for each infix separately (which
corresponds to using multiple counters in their characterization in terms of
scanners).
In turn, in $\LLT$ there is a \emph{single} threshold condition (i.e., the
inequality above) and in which different infixes may have distinct weights
(i.e., the coefficients $\alpha(m)$).
For instance, this allows counting distinct infixes $m_1$ and $m_2$ towards the
same threshold $t$, which is not possible in the $\LTTSS$ languages (as there
each infix is considered separately).

\subsection{Organization}
The rest of the paper is organized as follows:
\cref{sec_preliminaries} introduces basic concepts and notation.
Following that, in \cref{sec_fundamentals} we define the PACA model and prove
standard error reduction results as well as \cref{thm_one_vs_twosided}.
In \cref{sec_constant_time} we focus on the constant-time case and prove
\cref{thm_char_onesided_PACA,thm_char_twosided_PACA}.
Finally, in \cref{sec_sublinear_time} we address the general sublinear-time
case and prove \cref{thm_PACA_eq_ACA}.
We conclude with \cref{sec_further_directions} by mentioning a few further
research directions.


\section{Preliminaries}
\label{sec_preliminaries}

It is assumed the reader is familiar with the theory of cellular automata as
well as with basic notions of computational complexity theory (see, e.g., the
standard references \cite{goldreich08_computational_book,
arora09_computational_book, delorme99_cellular_book}).

All logarithms are to the base $2$.
The set of integers is denoted by $\Z$, that of non-negative integers by $\N_0$,
and that of positive integers by $\N_+$.
For a set $S$ and $n,m \in \N_+$, $S^{n \times m}$ is the set of $n$-row,
$m$-column matrices over $S$.
For $n \in \N_+$, $[n] = \{ i \in \N_0 \mid i < n \}$ is the set of the first
$n$ non-negative integers.
Also, for $a,b \in \Z$, by $[a,b] = \{ i \in \Z \mid a \le i \le b \}$ we always
refer to an interval containing only integers.

Symbols in words are indexed starting with zero.
The $i$-th symbol of a word $w$ is denoted by $w_i$.
For an alphabet $\Sigma$ and $n \in \N_0$, $\Sigma^{\le n}$ contains the words
$w \in \Sigma^\ast$ for which $\abs{w} \le n$.
For an infix $m \in \Sigma^{\le \abs{w}}$ of $w$, $\abs{w}_m$ is the number of
occurrences of $m$ in $w$.
Without restriction, the empty word is not an element of any language that we
consider.
(This is needed for definitional reasons; see \cref{def_ca,def_aca} below.)

We write $U_n$ (resp., $U_{n \times m}$) for a random variable distributed
uniformly over $\binalph^n$ (resp., $\binalph^{n \times m}$).
We also need the following variant of the Chernoff bound (see, e.g.,
\cite{vadhan12_pseudorandomness_book}):

\begin{theorem}[Chernoff bound]%
  \label{thm_chernoff}%
  Let $X_1,\dots,X_n$ be independently and identically distributed Bernoulli
  variables and $\mu = \Exp[X_i]$.
  There is a constant $c > 0$ such that the following holds for every
  $\eps = \eps(n) > 0$:
  \[
    \Pr\left[ \abs*{\frac{\sum_i X_i}{n} - \mu} > \eps \right]
    < 2^{-cn\eps^2}.
  \]
\end{theorem}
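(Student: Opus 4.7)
The plan is to establish the bound via the standard Chernoff/Cramér moment generating function (MGF) technique, handling the upper and lower tails separately and taking a union bound. Fix any $t>0$ and let $S_n = \sum_{i=1}^n X_i$. For the upper tail, I would apply Markov's inequality to the non-negative random variable $e^{t(S_n - n\mu)}$, obtaining
\[
  \Pr\!\left[\tfrac{S_n}{n} - \mu > \eps\right]
  = \Pr\!\left[e^{t(S_n - n\mu)} > e^{tn\eps}\right]
  \le e^{-tn\eps} \cdot \Exp\!\left[e^{t(S_n - n\mu)}\right].
\]
By i.i.d.\ independence of the $X_i$, the MGF factorizes as $\Exp[e^{t(S_n - n\mu)}] = \bigl(\Exp[e^{t(X_1 - \mu)}]\bigr)^n$, reducing everything to bounding the single-variable MGF.

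Next I would control $\Exp[e^{t(X_1 - \mu)}]$ for a Bernoulli variable of mean $\mu$. Direct computation gives $\Exp[e^{tX_1}] = 1 - \mu + \mu e^t$, and combining this with the elementary inequality $1 + x \le e^x$ yields $\Exp[e^{tX_1}] \le \exp(\mu(e^t - 1))$, so $\Exp[e^{t(X_1-\mu)}] \le \exp(\mu(e^t - 1) - \mu t)$. A quick Taylor bound of the form $e^t - 1 - t \le t^2$ valid for $t$ in a fixed interval around $0$ (say $|t| \le 1$) then gives $\Exp[e^{t(X_1-\mu)}] \le e^{\mu t^2} \le e^{t^2}$. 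Substituting back and choosing $t = \eps/2$ (which lies in the valid range whenever $\eps \le 1$; the case $\eps > 1$ is trivial since the deviation event is empty for Bernoulli variables when $\eps > 1$), the upper-tail probability is at most $\exp(-n\eps^2/4)$, i.e., bounded by $2^{-c n \eps^2}$ for a suitable absolute constant $c > 0$.

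Finally, the lower-tail estimate $\Pr[\tfrac{S_n}{n} - \mu < -\eps]$ follows by applying exactly the same argument to the variables $Y_i = 1 - X_i$, which are again i.i.d.\ Bernoulli with mean $1 - \mu$. A union bound over the two tails then doubles the probability, which is absorbed into a slight adjustment of the constant $c$, yielding the stated inequality. The only subtle point is ensuring the constant $c$ is truly independent of $\mu$ and $n$; this is handled by the uniform MGF bound $e^{t^2}$ above, which holds for all $\mu \in [0,1]$ and all $|t| \le 1$, so no obstacle arises. Since the result is classical, I expect the write-up in the paper (if any) to simply reference a standard source such as the cited pseudorandomness textbook rather than reprove it.
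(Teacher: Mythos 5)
The paper does not prove this statement at all: it is quoted as a standard fact with a pointer to the cited pseudorandomness textbook, so there is no in-paper argument to compare against. Your MGF/Cram\'er derivation is precisely the classical proof such a reference supplies --- exponential Markov, factorization of the MGF by independence, the bound $1-\mu+\mu e^t \le e^{\mu(e^t-1)}$, the Taylor estimate $e^t-1-t\le t^2$ for $\abs{t}\le 1$, the choice $t=\eps/2$, and symmetrization via $Y_i = 1-X_i$ for the lower tail. All of these steps are sound, and your uniform bound $\Exp[e^{t(X_1-\mu)}]\le e^{t^2}$ does make the constant independent of $\mu$ and $n$.

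One caveat concerns your very last step: the factor of $2$ from the union bound cannot in general be ``absorbed into $c$''. Your argument yields $\Pr[\,\cdot\,]\le 2e^{-n\eps^2/4}$, and $2e^{-n\eps^2/4} < 2^{-cn\eps^2}$ forces $n\eps^2$ to exceed a fixed positive threshold; for small $\eps$ the strict inequality in the statement is in fact false as literally written (take $n=2$ and $\mu=1/3$: then $\abs{S_2/2-\mu}\ge 1/6$ always, so the left-hand side equals $1$ for $\eps<1/6$ while the right-hand side is below $1$). This is a defect of the theorem's phrasing --- a folklore over-statement of the Chernoff bound --- rather than of your proof, and it is harmless for the paper's only application (the two-sided error reduction, where $\eps$ is a fixed constant and the number of trials is taken large enough that $n\eps^2$ dominates the additive $1$ in the exponent). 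A literally correct target would be $\Pr[\,\cdot\,]\le 2^{-cn\eps^2+1}$ or the usual $2^{-\Omega(n\eps^2)}$ formulation.
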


Many of our low-level arguments make use of the notion of a
\emph{lightcone}.\footnote{%
  Some sources distinguish between \emph{future} and \emph{past} lightcones.
  Here we shall only need past lightcones.
}
For a set $S$ and non-negative integers $n \le m$, a lightcone $L =
(\ell_{i,j})$ of \emph{radius} $m$ and \emph{height} $n$ over $S$ is a
trapezoidal (when $n<m$) or triangular (when $n=m$) array of elements
$\ell_{i,j} \in S$, where $i \in [0,n]$ and $j \in [-m,m]$:
\[
  \begin{matrix}
    \ell_{0,-m} & \ell_{0,-m+1} & \cdots & \cdots & \cdots
    & \ell_{0,0} & \cdots & \cdots & \cdots & \ell_{0,m-1} & \ell_{0,m} \\
    & \ell_{1,-m+1} & \cdots & \cdots & \cdots
    & \ell_{1,0}
    & \cdots & \cdots & \cdots & \ell_{1,m-1} \\
    & & \ddots & & & \vdots & & & \iddots \\
    & & & \ell_{n,-m+n} & \cdots & \ell_{n,0} & \cdots & \ell_{n,m-n}
  \end{matrix}
\]
The element $\ell_{0,0}$ is the \emph{center} of the lightcone.
The \emph{layers} of $L$ are indexed by $i$, where the $i$-th layer contains
$2(m-i)+1$ elements.
Hence, the top layer contains $2m+1$ elements and the bottom one $2(m-n)+1$; in
particular, the bottom layer is a single element if and only if $n=m$.
There are $\sum_{i=0}^n (2(m-i)+1) = (n+1)(2m-n+1)$ elements in a lightcone in
total.

\begin{definition}[Neighborhoods and lightcones]
  Let $C$ be a CA and $n \in \N_+$.
  For $i \in [n]$ and $r \in \N_0$, the interval $[i-r,i+r] \cap [n]$ forms the
  \emph{$r$-neighborhood} of $i$.
  For $t \in \N_0$, the \emph{$t$-lightcone} of $i$ is the lightcone of radius
  and height $t$ centered at $i$ in the $0$-th row (i.e., the initial
  configuration) of the space-time diagram of $C$.\footnote{%
    If the lightcone's dimensions overstep the boundaries of the space-time
    diagram (i.e., $i$ is too close to either of the borders of $C$ (e.g.,
    $i<t$)), then some cells in the $t$-lightcone will have undefined states.
    In this case, we set the undefined states to $\$$, which ensures consistency
    with $\delta$.}
\end{definition}

\begin{figure}
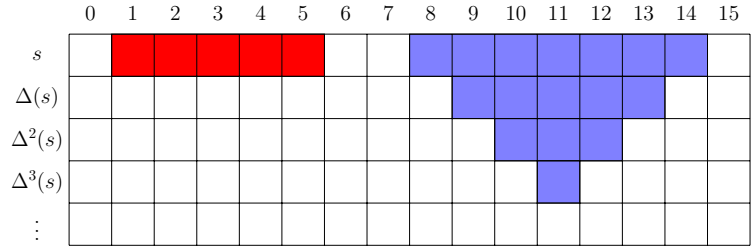

  \centering
  \includestandalone[scale=.56]{figures/def_ca}
  \caption{Space-time diagram of a CA with $16$ cells for an initial
  configuration $s$.
  (States have been omitted for simplicity.)
  The cells marked in red form the $2$-neighborhood of cell $3$, the ones in
  blue the $3$-lightcone of cell number $11$.}
  \label{fig_def_ca}
\end{figure}


\section{Fundamentals}
\label{sec_fundamentals}

In this section, we introduce the definition of PACA.
Following that, we prove basic error reduction results and conclude with the
proof of \cref{thm_one_vs_twosided}.

As customary for randomized models of computation, one may consider both online
and offline views of our model.
Since it gives a more natural presentation, in the definition below we first
assume an online perspective and then address the definitional issue mentioned
in the introduction.
In the last part, we switch to an offline view that we will use for the rest of
the paper; this is more comfortable to work with since we can then refer to the
cells' coin tosses explicitly.

\begin{definition}[PACA]%
  \label{def_PACA}%
  Let $Q$ be a finite set of states and $\Sigma \subseteq Q$ an alphabet.
  A \emph{probabilistic ACA} (PACA) $C$ is a CA with two local transition
  functions $\delta_0, \delta_1\colon Q^3 \to Q$.
  At each step of $C$, each cell tosses a fair coin $c \in \binalph$ and updates
  its state according to $\delta_c$; that is, if the current configuration of
  $C$ is $s \in Q^n$ and the cells obtain coin tosses $r = r_0 \cdots r_{n-1}
  \in \binalph^n$ (where $r_i$ is the coin toss of the $i$-th cell), then the
  next configuration of $C$ is
  \[
    \Delta_r(s) = \delta_{r_0}(\$,s_0,s_1) \, \delta_{r_1}(s_0,s_1,s_2) \,
      \cdots \, \delta_{r_{n-1}}(s_{n-2},s_{n-1},\$).
  \]
  Seeing this process as a Markov chain $M$ over $Q^n$, we recast the global
  transition function $\Delta = \Delta_{U_n}$ as a family of random variables
  $(\Delta(s))_{s \in Q^n}$ parameterized by the current configuration $s$ of
  $C$, where $\Delta(s)$ is sampled by starting in state $s$ and performing a
  single transition on $M$ (having drawn the cells' coin tosses according to
  $U_n$).
  Similarly, for $t \in \N_0$, $\Delta^t(s)$ is sampled by starting in $s$ and
  performing $t$ transitions on $M$.

  A \emph{computation} of $C$ for an input $x \in \Sigma^n$ is a path in $M$
  starting at $x$.
  The computation is \emph{accepting} if the path visits $A^n$ at least once.
  In order to be able to quantify the probability of a PACA accepting an input,
  we additionally require for every PACA $C$ that there is a function $T\colon
  \N_+ \to \N_0$ such that, for any input $x \in \Sigma^n$, every accepting
  computation for $x$ visits $A^n$ for the first time in strictly less than
  $T(n)$ steps; that is, if there is $t \in \N_0$ with $\Delta^t(x) \in A^n$,
  then $\Delta^{t_1}(x) \in A^n$ for some $t_1 < T(n)$.
  (Hence, every accepting computation for $x$ has an initial segment with
  endpoint in $A^n$ and whose length is strictly less than $T(n)$.)
  If this is the case for any such $T$, then we say $C$ has \emph{time
    complexity} (bounded by) $T$.

  With this condition in place, we may now equivalently replace the coin tosses
  of $C$ with a matrix $R \in \binalph^{T(n) \times n}$ of bits with rows
  $R_0,\dots,R_{T(n)-1}$ and such that $R_j(i)$ corresponds to the coin toss of
  the $i$-th cell in step $j$.
  (If $C$ accepts in step $t$, then the coin tosses in rows $t,\dots,T(n)-1$
  are ignored.)
  We refer to $R$ as a \emph{random input} to $C$.\footnote{%
The number of rows of $R$ is dependent on the choice of $T$.
This is not an issue here since any superficial rows are ignored by $C$; that
is, without restriction we may take $T$ to be such that every value $T(n)$ is
minimal and set the number of rows of $R$ to $T(n)$.
The motivation for letting $R$ be larger is that, when simulating a PACA, it may
be the case that it is more convenient (or even possible) to compute only an
upper bound $T'(n) \ge T(n)$ instead of the actual minimal value $T(n)$.
  }
  Blurring the distinction between the two perspectives (i.e., online and
  offline randomness), we write $C(x,R) = 1$ if $C$ accepts $x$ when its coin
  tosses are set according to $R$, or $C(x,R) = 0$ otherwise.
\end{definition}

As another remark, notice that in \cref{def_PACA} we opt for using binary coin
tosses along with only two local transition functions.
Nonetheless, this is sufficient to realize a set of $2^k$ local transition
functions $\delta_0,\dots,\delta_{2^k-1}$ for constant $k$ with a multiplicative
overhead of $k$.
(Namely, by having each cell collect $k$ coins in $k$ steps, interpret these as
the binary representation of $i \in [2^k]$, and then change its state according
to $\delta_i$.)

\cref{def_PACA} states the acceptance condition for a \emph{single} computation
(i.e., one fixed choice of a random input); however, we must still define 
acceptance based on \emph{all} computations (i.e., for random inputs picked
according to a uniform distribution).
The two most natural candidates are the analogues of the well-studied classes
$\RP$ and $\BPP$, which we define next.

\begin{definition}[$p$-error PACA]%
  \label{def_pPACA}%
  Let $L \subseteq \Sigma^\ast$ and $p \in [0,1)$.
  A \emph{one-sided $p$-error PACA for $L$} is a PACA $C$ with time complexity
  $T=T(n)$ such that, for every $x \in \Sigma^n$,
  \begin{align*}
    x \in L &\iff \Pr[C(x, U_{T \times n}) = 1] \ge 1-p
    && \text{and} & 
    x \notin L &\iff \Pr[C(x, U_{T \times n}) = 1] = 0.
  \end{align*}
  If $p = 1/2$, then we simply say $C$ is a \emph{one-sided error PACA}.
  Similarly, for $p < 1/2$, a \emph{two-sided $p$-error PACA for $L$} is a PACA
  $C$ with time complexity $T=T(n)$ for which
  \begin{align*}
    x \in L &\iff \Pr[C(x, U_{T \times n}) = 1] \ge 1-p
    && \text{and} & 
    x \notin L &\iff \Pr[C(x, U_{T \times n}) = 1] \le p
  \end{align*}
  hold for every $x \in \Sigma^\ast$.
  If $p = 1/3$, then we simply say $C$ is a \emph{two-sided error PACA}.
  In both cases, we write $L(C) = L$ and say $C$ \emph{accepts} $L$.
\end{definition}

Note that, to each $0$-error PACA $C$, one can obtain an equivalent DACA $C'$
with the same time complexity by setting the local transition function to
$\delta_0$.
In the rest of the paper, if it is not specified which of the two variants above
(i.e., one- or two-sided error) is meant, then we mean both variants
collectively.
See \cref{appx_example_daca_vs_paca} for an example of PACAs being more
efficient than DACAs.

From the perspective of complexity theory, it is interesting to compare the PACA
model with probabilistic circuits.
It is known that every $T(n)$-time DACA can be simulated by an $\L$-uniform AC
circuit (i.e., a Boolean circuit with gates of unbounded fan-in) having
$\poly(n)$ size and $O(\max\{1, T(n)/\log n\})$ depth
\cite{modanese21_sublinear-time_ijfcs}.
Using the same approach as in \cite{modanese21_sublinear-time_ijfcs}, we note
the same holds for PACAs if we use probabilistic AC circuits instead.
The proof in \cite{modanese21_sublinear-time_ijfcs} bases on descriptive
complexity theory, the central observation being that the state of a cell $i$
after $\log n$ steps given its ($\log n$)-neighborhood is a predicate that is
computable in logarithmic space.
Hence, for the PACA case we need only factor in the auxiliary random input into
this predicate.

A key property that PACAs have but probabilistic circuits do not, however, is
\emph{distance} between computational units.
(Indeed, in circuits, there is no such thing as the \enquote{length} of a wire.)
One consequence of this is the following simple fact.

\begin{lemma}[Independence of local events]%
  \label{lem_independence}%
  Let $C$ be a one- or two-sided error PACA, let $x \in \Sigma^n$ be an input to
  $C$, and let $T \in \N_+$.
  In addition, let $i,j \in [n]$ be such that $\abs{i-j} > 2(T-1)$ and $E_i$
  (resp., $E_j$) be an event described exclusively by the states of the $i$-th
  (resp., $j$-th) cell of $C$ in the time steps $0,\dots,T-1$ (e.g., the $i$-th
  cell accepts in some step $t$ where $t < T$).
  Then $E_i$ and $E_j$ are independent.
\end{lemma}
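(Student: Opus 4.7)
}
The plan is to trace each of the events $E_i$ and $E_j$ back to the collection of random bits on which it actually depends, and then to show that when $\abs{i-j} > 2(T-1)$ these two collections are disjoint; independence will then follow because the entries of the random input $R \sim U_{T \times n}$ are mutually independent.

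First I would show, by an easy induction on $t$, that for every $t \in [0,T-1]$ the state of cell $k$ at time $t$ is a \emph{deterministic} function $\phi_{k,t}$ of (i)~the initial states $x_{k'}$ for $k'$ in the $t$-neighborhood of $k$ and (ii)~the coin tosses $R_s(k')$ for $s \in [0,t-1]$ and $k'$ in the $(t-s)$-neighborhood of $k$. This is immediate from the definition of $\Delta_r$: one step back in time widens the interval of relevant indices by exactly one on each side. Hence the joint trajectory $(s_i^{(0)},\dots,s_i^{(T-1)})$ of cell $i$ over the first $T$ time steps is a deterministic function of $x$ together with the restriction $R|_{B_i}$ of $R$ to the \enquote{past cone} indices
\[
  B_i \;=\; \bigl\{\, (s,k') \;:\; 0 \le s \le T-2,\ \abs{k'-i} \le T-1-s \,\bigr\}.
\]

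Since $E_i$ is by assumption described exclusively by the states of cell $i$ in steps $0,\dots,T-1$, there is a set $\mathcal{A}_i$ of bit-patterns on $B_i$ such that $E_i$ occurs iff $R|_{B_i} \in \mathcal{A}_i$; analogously for $E_j$ with an index set $B_j$ and event $\mathcal{A}_j$. Projecting $B_i$ and $B_j$ onto the spatial coordinate, both lie in the $(T-1)$-neighborhoods of $i$ and of $j$ respectively, so if $\abs{i-j} > 2(T-1)$ these spatial neighborhoods are disjoint and therefore $B_i \cap B_j = \emptyset$.

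Finally, since $R \sim U_{T \times n}$ has independent uniform coordinates, the restrictions $R|_{B_i}$ and $R|_{B_j}$ are independent random variables. Therefore
\[
  \Pr[E_i \cap E_j]
  = \Pr[R|_{B_i} \in \mathcal{A}_i,\, R|_{B_j} \in \mathcal{A}_j]
  = \Pr[R|_{B_i} \in \mathcal{A}_i] \cdot \Pr[R|_{B_j} \in \mathcal{A}_j]
  = \Pr[E_i]\Pr[E_j],
\]
as required. The only point that needs a bit of care is the indexing in the lightcone inductive step (making sure the factor $2(T-1)$ in the hypothesis matches the width of the resulting spatial supports rather than, say, $2T$); this is the main bookkeeping obstacle but is otherwise straightforward.
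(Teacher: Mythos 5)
Your proposal is correct and follows essentially the same route as the paper's proof: identify the set of random-input entries (the past lightcone) that determines each cell's states over steps $0,\dots,T-1$, observe that the spatial supports lie in the $(T-1)$-neighborhoods of $i$ and $j$ and are therefore disjoint when $\abs{i-j} > 2(T-1)$, and conclude independence from the independence of the coordinates of $U_{T \times n}$. Your version merely spells out the inductive bookkeeping more explicitly than the paper does.
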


\begin{proof}
  For any random input $R$, the states of $k \in \{i,j\}$ in the time steps
  between $0$ and $T-1$ is uniquely determined by the values of
  $R(t,k-T+t+1),\dots,R(t,k+T-t-1)$ for $t \in [T]$.
  Without loss of generality, suppose $i \le j$.
  Since $i+T-1 < j-T+1$, $E_i$ and $E_j$ are conditioned on disjoint sets
  of values of $R$, thus implying independence.
\end{proof}

Note the proof still holds in case $T = 1$, in which case the events $E_i$ and
$E_j$ occur with probability either $0$ or $1$, thus also (trivially) implying
independence.

\subsection{Robustness of the Definition}
\label{sec_error_reduction}

We now prove that the definition of PACA is robust with respect to the choice of
$p = 1/2$ (resp., $p = 1/3$) for the error of one-sided (resp., two-sided) error
PACA.

\subsubsection{One-Sided Error}

For one-sided error, we can reduce the error $p$ to any desired constant value
$p'$.

\begin{proposition}[restate=restatepropOnesidedErrorReduction,name=]%
  \label{prop_onesided_error_reduction}%
  Let $p, p' \in (0,1)$ be constant and $p' < p$.
  For every one-sided $p$-error PACA $C$, there is a one-sided $p'$-error PACA
  $C'$ such that $L(C) = L(C')$.
  Furthermore, if $C$ has time complexity $T(n)$, then $C'$ has time complexity
  $O(T(n))$.
\end{proposition}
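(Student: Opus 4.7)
The plan is to carry out standard independent-trial amplification, but run in parallel on the PACA itself. Let $k$ be the least positive integer with $p^k \le p'$; since $p,p' \in (0,1)$ are constants, so is $k$. The goal is to make $C'$ maintain $k$ independent simulated copies of $C$ on the same input $x$ and accept as soon as at least one of them would have accepted. The main subtlety---and where I expect the real work to lie---is that ACA acceptance is a conjunction over cells, so a na\"ive rule of the form ``some copy is currently accepting at this cell'' is unsound: different cells could vouch for different copies at the same step, yielding a spurious accept on inputs $x \notin L$.

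\textbf{Construction.} Let $C'$ have state set essentially $Q^k \times [k]$, with each input symbol $x_i \in \Sigma$ implicitly identified with the composite $(x_i,\ldots,x_i,0)$. Each cell stores a $k$-tuple of simulated $C$-states together with a counter $c \in [k]$ that is globally synchronized across all cells, simply because they update in lockstep from $t=0$. In one step of $C'$, a cell reads the composite states of its two neighbors, tosses its coin $r$, applies $\delta_r$ to the $c$-th component using only the $c$-th components of its neighborhood (leaving the other $k-1$ components unchanged), and increments $c$ modulo $k$. In this way every $k$ real steps advance each copy by exactly one simulated step, and the coin tosses used by copy $j$ come solely from those real steps in which $c=j$; these are disjoint sets of mutually independent uniform bits, so the $k$ copies are mutually independent and each is distributed exactly as a run of $C$ on $x$. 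Finally, a cell of $C'$ is declared to be accepting in a given step if and only if its $c$-th component belongs to $A$, where $c$ is the counter value attached to its current state.

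\textbf{Correctness.} If copy $j$ reaches an all-accepting configuration at some simulated time $\tau < T(n)$, then at the corresponding real time (within $k\tau + k$) every cell simultaneously has $c=j$ and its $j$-th component in $A$, so $C'$ visits $(A')^n$; hence $C'$ has real-time complexity $O(kT(n)) = O(T(n))$. For $x \in L$, each copy accepts with probability at least $1-p$, so by independence the probability that none of the $k$ copies accepts is at most $p^k \le p'$. For $x \notin L$, no copy ever becomes all-accepting, and since at each real step the acceptance check queries a single globally agreed-upon copy across all cells, $C'$ never visits $(A')^n$ and has error exactly $0$. The globally synchronized counter $c$ is precisely the device that converts the PACA's conjunctive acceptance primitive into the desired disjunction over the $k$ independent trials, thereby resolving the obstacle identified at the outset.
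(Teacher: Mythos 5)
Your proposal is correct and is essentially the paper's own proof: both run $k$ (the paper's $m$) independent copies of $C$ in round-robin on a product state set $Q^k\times[k]$, use the lockstep-synchronized counter so that all cells test the \emph{same} copy in any given step (which is exactly how the conjunctive acceptance condition is kept sound on inputs outside $L$), and conclude via independence that the failure probability on $x\in L$ is at most $p^k\le p'$ with only a constant factor $k$ overhead in time. The only cosmetic difference is your cleaner choice of $k$ as the least integer with $p^k\le p'$ versus the paper's explicit formula for $m$.
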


It follows that the definition of PACA is robust under the choice of $p$ (as
long as it is constant) and regardless of the time complexity (up to constant
multiplicative factors).

The proof is essentially a generalization of the idea used in
\cite{modanese21_sublinear-time_ijfcs} to show that the sublinear-time DACA
classes are closed under union.
Namely, $C'$ simulates several copies $C_0,\dots,C_{m-1}$ of $C$ in parallel and
accepting if and only if at least one $C_i$ accepts.
This idea is particularly elegant because $m$ can be chosen to be constant and
we update the $C_i$ in a round-robin fashion (i.e., first $C_0$, then $C_1$,
$C_2$, etc., and finally $C_0$ again after $C_{m-1}$).
The alternative is to simulate each $C_i$ for $T(n)$ steps at a time, which is
not possible in general since we would have to compute $T(n)$ first.
The construction we give avoids this issue entirely.

\begin{proof}
  We construct a PACA $C'$ with the desired properties.
  Let $m = \ceil{\log(1/p' - 1/p)}$.
  Furthermore, let $Q$ be the state set of $C$ and $\Sigma$ its input alphabet.
  We set the state set of $C'$ to $Q^m \times [m] \cup \Sigma$.
  Given an input $x$, every cell of $C'$ initially changes its state from $x(i)$
  to $(x(i),\dots,x(i),0)$.
  The cells of $C'$ simulate $m$ copies of $C$ as follows:
  If the last component of a cell contains the value $j$, then its $j$-th
  component\footnote{%
    In the same manner as we do for the indices of a word, we number the
    components starting with zero.}
  $q_0$ is updated to $\delta(q_{-1},q_0,q_1)$, where $q_{-1}$ and $q_1$ are the
  $j$-th components of the left and right neighbors, respectively (or $\$$ in
  case of a border cell); at the same time, the last component of the cell is
  set to $j+1$ if $j < m$ or $0$ in case $j = m$.
  A cell of $C'$ is accepting if and only if its last component is equal to $j$
  and its $j$-th component is an accepting state of $C$.

  Denote the $i$-th simulated copy of $C$ by $C_i$.
  Clearly, $C'$ accepts in step $mt+i+1$ for $i \in [m]$ if and only if $C_i$
  accepts in step $t$, so we immediately have that $C'$ has $O(T(n))$ time
  complexity.
  For the same reason and since $C'$ never accepts in step $0$, $C'$ does not
  accept any input $x \notin L(C)$.
  As for $x \in L(C)$, note the $m$ copies of $C$ are all simulated using
  independent coin tosses, thus implying
  \[
    \Pr[\text{$C'$ does not accept $x$}]
    = \Pr[\forall i \in [m]: \text{$C_i$ does not accept $x$}]
    < p^m
    \le p'.
  \]
  Hence, $C'$ accepts $x$ with probability at least $1-p'$, as desired.
\end{proof}

\subsubsection{Two-Sided Error}

For two-sided error, we show the same holds for every choice of $p$ for
\emph{constant-time} PACA.
We remark the construction is considerably more complex than in the one-sided
error case.

\begin{proposition}[restate=restatepropTwosidedErrorReduction,name=]%
  \label{prop_twosided_error_reduction}%
  Let $p,p' \in (0,1/2)$ be constant and $p' < p$.
  For every two-sided $p$-error PACA $C$ with constant time complexity $T =
  O(1)$, there is a two-sided $p'$-error PACA $C'$ with time complexity $O(T) =
  O(1)$ and such that $L(C) = L(C')$.
\end{proposition}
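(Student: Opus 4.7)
The plan is to adapt the standard Chernoff-based majority amplification for two-sided error to the distributed PACA setting. $C'$ simulates $m$ independent copies of $C$ in parallel, where $m$ is a constant depending only on $p$ and $p'$, and then, in additional constant time, verifies that a strict majority of the copies accepted. Because both $m$ and $T$ are constant, $C'$'s total runtime stays $O(1)$, as required.

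I would implement this in two phases. In the \emph{simulation phase} ($mT$ rounds), each cell carries a state storing its complete state history $(q_j^{i,t})_{i\in[m],\, t\in[T]}$ across every copy together with a round counter; this is finite data since $m,T = O(1)$. Using the round-robin trick from the proof of \cref{prop_onesided_error_reduction}, in round $mt+i$ each cell uses its fresh coin toss to advance copy $i$ of $C$ by one step, so after $mT$ rounds the $m$ copies have been run with mutually independent coins and each cell has recorded its local trace in each copy. No state in this phase is accepting. In the \emph{verification phase}, let $\mathcal{K}$ be the finite set of pairs $(I, (\tau_i)_{i \in I})$ with $I \subseteq [m]$, $\abs{I} > m/2$, and $\tau_i \in [T]$; enumerate $\mathcal{K}$ in an arbitrary order and, in the round corresponding to case $k = (I_k, (\tau_i)_{i \in I_k})$, have each cell $j$ deterministically enter an accepting state iff $q_j^{i,\tau_i} \in A$ for every $i \in I_k$. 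Each such check depends only on $j$'s stored history, so it is purely local. The whole construction runs in $mT + \abs{\mathcal{K}} = O(1)$ rounds.

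For correctness, observe that $C'$ accepts in the verification round for case $k$ iff every cell is simultaneously accepting in that round, which by construction happens iff for every $i \in I_k$ every cell is in $A$ in copy $i$ at step $\tau_i$, i.e., iff every copy in $I_k$ accepts in the sense of $C$. Taking the disjunction over $k \in \mathcal{K}$, it follows that $C'$ accepts iff a strict majority of the $m$ copies accept. Since the copies use mutually independent random bits and each accepts with probability $\ge 1-p$ for $x \in L$ and $\le p$ for $x \notin L$ (with $p < 1/2$), a Hoeffding bound yields two-sided error at most $\exp(-\Omega(m(1/2-p)^2))$, which drops below $p'$ once $m$ is chosen as a sufficiently large constant depending only on $p$ and $p'$.

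The main obstacle is conceptual: a priori the event \enquote{copy $i$ accepted} is global and hence not locally decidable by any single cell, which seemingly precludes a local majority vote. The key insight is that the PACA acceptance condition already supplies a distributed conjunction across cells for free, while the round-by-round enumeration over the $O(1)$ cases in $\mathcal{K}$ supplies a distributed disjunction over the possible accepting witnesses $(I, (\tau_i))$; together these compute precisely the strict majority across copies without any additional inter-cell coordination during verification. This trick crucially relies on $T$ being constant, so that $\abs{\mathcal{K}}$ remains $O(1)$ and fits into the time budget.
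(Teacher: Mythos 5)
Your proposal is correct and follows essentially the same strategy as the paper's proof: simulate $m$ independent copies of $C$ in constant time, then exploit the all-cells-accept condition as a distributed conjunction while enumerating, over $O(1)$ deterministic rounds, all majority subsets together with all tuples of candidate acceptance times, and finish with a Chernoff bound. The only differences are implementation details (you store each cell's full state history and check it afterwards, whereas the paper stores the $T \times m$ matrix of coin tosses up front and re-runs the simulations while stepping nested counters over the time-step tuples and majority sets), which do not change the substance of the argument.
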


To reduce the error, we use the standard method based on the Chernoff bound
(\cref{thm_chernoff}); that is, the PACA $C'$ simulates $m$ independent copies
$C_0,\dots,C_{m-1}$ of $C$ (for an adequate choice of $m$) and then accepts if
and only if the majority of the $C_i$ do.
More precisely, $C'$ loops over every possible majority $\mathcal{M} \subseteq
[m]$ (i.e., every set $\mathcal{M} \subseteq [m]$ with $\abs{\mathcal{M}} \ge
m/2$) over the $C_i$ and checks whether $C_i$ accepts for every $i \in
\mathcal{M}$ (thus reducing majority over the $L(C_i)$ to intersection over the
$L(C_j)$ where $j \in \mathcal{M}$).
In turn, to check whether every $C_j$ accepts for $j \in \mathcal{M}$, $C'$
tries every possible combination of time steps for the $C_j$ to accept and
accepts if such a combination is found. If this entire process fails, then the
majority of the $C_i$ do not accept, and thus $C'$ does not accept as well.
(Obviously, this idea is only feasible if $m$ as well as the time complexities
of the $C_i$ are constant.)

\begin{proof}
  Let $Q$ be the state set of $C$ and $\Sigma$ its input alphabet.
  We also fix a constant $m$ depending only on $p$ and $p'$ which will be set
  later and let $M = \binom{m}{\ceil{m/2}}$.

  \proofsubparagraph{Construction.}
  The state set of $C'$ is $Q' \cup \Sigma$, where $Q'$ is a set of states
  consisting of the following components:
  \begin{itemize}
    \item an $m$-tuple from $Q^m$ of states of $C$
    \item an $m$-tuple from $[T]^m$ representing an $m$-digit, $T$-ary counter
    $i_0 \cdots i_{m-1}$
    \item a value from $[M]$ representing a counter modulo $M$
    \item an input symbol from $\Sigma$
    \item a $(T \times m)$-matrix of random bits, which are all initially set to
    an undefined value different from $0$ or $1$
  \end{itemize}
  Given an input $x$, in the first step of $C'$ every cell changes its state
  from $x(i)$ to a state in $Q'$ where the $Q^m$ components are all set to
  $x(i)$, the numeric ones (i.e., with values in $[T]^m$ and $[M]$) set to $0$,
  and the value $x(i)$ is stored in the $\Sigma$ component.
  In the first phase of $C'$, which lasts for $mT$ steps, the cells fill their
  $(T \times m)$-matrix with random bits.
  This is the only part of the operation of $C'$ in which its random input is
  used (i.e., in all subsequent steps, $C'$ operates deterministically and the
  outcome of the remaining coin tosses is ignored).

  In this next phase, $C'$ simulates $m$ copies $C_0,\dots,C_{m-1}$ of $C$ in
  its $Q^m$ components (as in the proof of
  \cref{prop_onesided_error_reduction}).
  The random bits for the simulation are taken from the previously filled $(T
  \times m)$-matrix, where the $T$ entries in the $i$-th column are used as the
  coin tosses in the simulation of $C_i$ (with the entry in the respective
  $j$-th row being used in the $j$-th step of the simulation).
  Meanwhile, the $i_0 \cdots i_{m-1}$ counter is taken to represent that the
  simulation of $C_j$ is in its $i_j$-th step.

  The cells update their states as follows:
  At each step, the counter is incremented and the respective simulations are
  updated accordingly; more precisely, if $C_j$ is in step $i_j$ and $i_j$ was
  incremented (as a result of the counter being incremented), then the
  simulation of $C_j$ is advanced by one step; if the value $i_j$ is reset to
  $0$, then the simulation of $C_j$ is restarted by setting the respective state
  to $x(i)$.
  Every time the counter has looped over all possible values, the $[M]$
  component of the cell is incremented (and the process begins anew with the
  counter set to all zeroes).
  When the value of the $[M]$ component is equal to $M-1$ and the counter
  reaches its final value (i.e., $i_j = T-1$ for every $j$), then the cell
  conserves its current state indefinitely.

  We agree upon an enumeration $\mathcal{M}_0,\dots,\mathcal{M}_{M-1}$ of the
  subsets of $[m]$ of size $\ceil{m/2}$ and identify a value of $i$ in the $[M]$
  component with $\mathcal{M}_i$.
  A cell whose $[M]$ component is equal to $i$ is then accepting if and only if,
  for every $j \in \mathcal{M}_i$, its $j$-th component is an accepting state of
  $C$.

  \proofsubparagraph{Correctness.}
  By construction, if $C'$ accepts in a time step where the $T$-ary counters
  have the value $i_0 \cdots i_{m-1}$ and the $[M]$ component the value $j$,
  then this is the case if and only if $C_k$ accepts in step $i_k$ for every $k
  \in \mathcal{M}_j$.
  Hence, $C'$ accepts if and only if at least $\ceil{m/2}$ of the simulated
  copies of $C$ accept (which, by definition, must occur in a time step prior to
  $T$); that is, $C'$ accepts if and only if a majority of the
  $C_0,\dots,C_{m-1}$ accept.

  Finally, we turn to setting the parameter $m$ so that $C'$ only errs with
  probability at most $p'$.
  Let $X_i$ be the random variable that indicates whether $C_i$ accepts
  conditioned on its coin tosses.
  Then the probability that $C'$ errs is upper-bounded by the probability that
  $(\sum_i X_i)/m$ deviates from the mean $\mu = \Pr[C(x,r) = 1] \ge 1-p$ by
  more than $\eps = 1/2 - p$.
  By the Chernoff bound (\cref{thm_chernoff}), this occurs with probability at
  most $2^{-cm\eps^2}$ for some constant $c > 0$, so setting $m$ such that $m
  \ge \log(1/p')/c\eps^2$ completes the proof.
\end{proof}

It remains open whether a similar result holds for general (i.e.,
non-constant-time) two-sided error PACA.
Generalizing our proof of \cref{prop_twosided_error_reduction} would require at
the very least a construction for intersecting non-constant-time PACA languages.
(Note we do show closure under intersection for the \emph{constant-time}
languages later in \cref{prop_closure_union}.)
If such a construction were to be known, then extending the idea above one could
use that the union of constantly many $T$-time PACA languages can be recognized
in $O(T)$ time (as we prove later in \cref{prop_closure_union}) and represent
the majority over $m$ PACA languages $L_0,\dots,L_{m-1}$ as the union over all
possible intersections of $\ceil{m/2}$ many $L_i$.
Note that closure under intersection is open in the deterministic setting (i.e.,
of DACA) as well \cite{modanese21_sublinear-time_ijfcs}.

\subsection{One- vs. Two-Sided Error}

The results of \cref{sec_error_reduction} are also useful in obtaining the
following:

\restatethmOneVsTwosided*

\begin{proof}
  The first item follows from \cref{prop_onesided_error_reduction}:
  Transform $C$ into a one-sided error PACA $C'$ with error at most $1/3$ and
  then notice that $C'$ also qualifies as a two-sided error PACA (as it simply
  never errs on \enquote{no} instances).
  For the second item, consider the language
  \[
    L = \{ x \in \binalph^+ \mid \abs{x}_1 \le 1 \}.
  \]
  We obtain a constant-time two-sided error PACA for $L$ as follows:
  If a cell receives a $0$ as input, then it immediately accepts; otherwise, it
  collects two random bits $r_0$ and $r_1$ in the first two steps and then,
  seeing $r_0 r_1$ as the binary representation of an integer $1 \le t \le 4$,
  it accepts (only) in the subsequent $t$-th step.
  Hence, if the input $x$ is such that $\abs{x}_1 \le 1$, the PACA always
  accepts; conversely, if $\abs{x}_1 \ge 2$, then the PACA only accepts if all
  $1$ cells pick the same value for $t$, which occurs with probability at most
  $1/4$.

  It remains to show $L(C) \neq L$ for any $T$-time one-sided error PACA $C$
  where $T = o(\sqrt{n})$.
  Let $n$ be large enough so that $T = T(n) \le \sqrt{n}/2$.
  Observe that $L \cap \binalph^n = \{ 0^n, x_1, \dots, x_n \}$ where $x_i =
  0^{i-1}10^{n-i}$.
  Let us now assume that $x_i \in L(C)$ holds for every $i$.
  Since $C$ accepts with probability at least $1/2$, by the pigeonhole principle
  there is $R$ such that $C(x_i,R) = 1$ for at least a $1/2$ fraction of the
  $x_i$.
  In addition, by averaging there is a step $t < T$ such that at least a $1/2T$
  fraction of the $x_i$ is accepted by $C$ in step $t$.
  Since there are $n/2T \ge 2T \ge 2t + 2$ such $x_i$, we can find $i,j \in
  [n]$ with $j \ge i + 2t + 1$ and $x_i,x_j \in L(C)$.
  Consider now the input
  \[
    x^\ast = 0^{i-1}10^{j-i-1}10^{n-j},
  \]
  which is not in $L$.
  We argue $C(x^\ast,R) = 1$, thus implying $L(C) \neq L$ and completing the
  proof.

  \begin{figure}
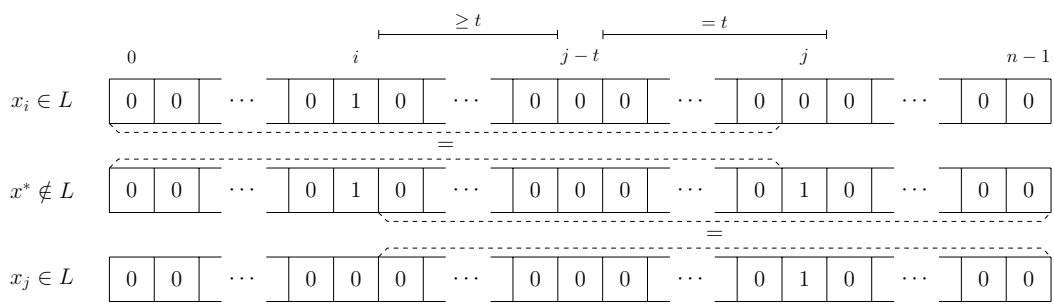

    \centering
    \includestandalone[scale=.59]{figures/thm_one_vs_twosided}
    \caption{Constructing $x^\ast \notin L$ from $x_i,x_j \in L$.
      The numbers above the cells indicate their respective indices.
      Since every $t$-neighborhood of $x^\ast$ appears in either $x_i$ or $x_j$
      and both $x_i$ and $x_j$ are accepted in (exactly) $t$ steps, it follows
      that $C$ accepts $x^\ast$ in $t$ steps.}
    \label{fig_thm_one_vs_twosided}
  \end{figure}

  We can see this by comparing the local views of the \enquote{bad} word
  $x^\ast$ with the \enquote{good} ones $x_i$ and $x_j$ (see
  \cref{fig_thm_one_vs_twosided}):
  Let $k \in [n]$ be any cell of $C$.
  If $k < j-t$, then the $t$-neighborhood of $k$ on input $x^\ast$ is identical
  to that when the input is $x_i$, so $k$ must be accepting in step $t$.
  Similarly, if $k \ge j-t$, then the $t$-neighborhood of $k$ in $x^\ast$ is
  the same as in $x_j$, so $k$ is accepting as well.
  It follows all cells of $C$ are accepting in step $t$ for inputs $x^\ast$ and
  $R$.
\end{proof}


\section{The Constant-Time Case}
\label{sec_constant_time}

In this section we now focus on constant-time PACA.
Our goal will be to characterize the constant-time classes of both one- and
two-sided error PACA (i.e.,
\cref{thm_char_onesided_PACA,thm_char_twosided_PACA}).
First, we introduce the concept of \emph{critical cells}, which is central to
our analysis.

\subsection{Critical Cells}

\begin{definition}[Critical cell]
  Let $C$ be a one- or two-sided error PACA, and let $x \in L(C) \cap \Sigma^n$.
  We say a cell $i \in [n]$ is \emph{critical} for $x$ in step $t \in \N_0$ if
  there are random inputs $R,R' \in \binalph^{t \times n}$ such that $i$ is
  accepting in step $t$ of $C(x,R)$ but not in step $t$ of $C(x,R')$.
\end{definition}

In other words, if $E$ is the event of $i$ being accepting in step $t$ of $C$ on
input $x$, then $0 < \Pr[E] < 1$ (where the probability is taken over the coin
tosses of $C$).
We should stress that whether a cell is critical or not may be \emph{highly
dependent} on $x$ and $t$; for instance, there may be inputs $x_1 \neq x_2$
where the cell $i$ is critical for $x_1$ but not for $x_2$.

As it turns out, the number of critical cells of a constant-time PACA is also
constant.

\begin{lemma}[restate=restatelemCriticalCells, name=]%
  \label{lem_critical_cells}%
  Let $C$ be a $T$-time (one- or two-sided error) PACA for $T \in \N_+$, and let
  $x \in L(C) \cap \Sigma^n$.
  In addition, let $t \in [T]$ be a time step in which $x$ is accepted by $C$
  with non-zero probability.
  Then there are $2^{O(T^2)}$ cells that are critical for $x$ in step $t$.
  It follows there are $T \cdot 2^{O(T^2)} = 2^{O(T^2)}$ critical cells for $x$
  in total (i.e., all such time steps comprised).
\end{lemma}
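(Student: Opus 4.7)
\emph{Proof plan.}
The plan is to control the number of critical cells by combining two ingredients: (i) a granularity bound on the acceptance probability of any single critical cell, and (ii) the independence of sufficiently separated cells at step $t$ provided by \cref{lem_independence}.

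First I would observe that whether cell $i$ accepts in step $t$ is determined by the input restricted to $[i-t,i+t]$ (which is fixed) and by the $t^2$ random bits living in the $t$-lightcone of $i$; indeed, the $s$-th row of that lightcone (for $s \in [0,t-1]$) contributes $2(t-s)-1$ coins, summing to $t^2$. Hence $\Pr[E_i]$ is a dyadic rational with denominator dividing $2^{t^2}$, and when $i$ is critical we obtain $2^{-t^2} \le \Pr[E_i] \le 1 - 2^{-t^2}$, so in particular $\Pr[E_i] \le 1 - 2^{-T^2}$.

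Next, I would use \cref{lem_independence} to extract a well-separated subfamily of critical cells. Let $S$ denote the set of critical cells at step $t$. By a greedy packing argument, $S$ contains a subset $S'$ with pairwise distance strictly greater than $2(T-1)$ such that $|S'| \ge |S|/(4T-1)$. By \cref{lem_independence}, the events $\{E_i\}_{i \in S'}$ are mutually independent, and since the hypothesis implies that the global event $\bigcap_{i \in [n]} E_i$ has positive probability (so every non-critical cell must deterministically accept, and every critical cell must satisfy $\Pr[E_i] > 0$), we obtain
\[
    0 \;<\; \Pr\Bigl[\,\bigcap_{i \in [n]} E_i\Bigr]
    \;\le\; \prod_{i \in S'} \Pr[E_i]
    \;\le\; \bigl(1 - 2^{-T^2}\bigr)^{|S'|}.
\]

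The main obstacle is converting the mere positivity on the left-hand side into a quantitative upper bound on $|S'|$. To do this I would pigeonhole on the local input pattern $x|_{[i-t,i+t]}$: there are only $|\Sigma|^{2T+1} = 2^{O(T)}$ such patterns, so some critical pattern $w$ is shared by at least $|S'|/2^{O(T)}$ cells of $S'$. All cells in that sub-subset have the same individual acceptance probability $p_w \in (0,1)$ with $p_w \le 1-2^{-T^2}$, and they are independent by construction; bounding the resulting product against the global acceptance probability at step $t$, which, combined with $x \in L(C)$ and the union bound across the $T$ time steps, can be lower-bounded in terms of $T$ alone, yields $|S'|/2^{O(T)} \le O(T^2) \cdot 2^{T^2}$. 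Multiplying back by the $2^{O(T)}$ patterns and by the $4T-1$ packing factor gives $|S| \le 2^{O(T^2)}$, and finally summing over the at most $T$ steps with positive acceptance probability gives the stated total bound $T \cdot 2^{O(T^2)} = 2^{O(T^2)}$.
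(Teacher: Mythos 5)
Your proposal is correct and follows essentially the same route as the paper's proof: bound each critical cell's step-$t$ acceptance probability by $1-2^{-T^2}$ using the granularity of the (at most) $T^2$ relevant coin tosses in its lightcone, greedily extract a well-separated subfamily losing only a factor $O(T)$, multiply the per-cell bounds via \cref{lem_independence}, and play the resulting product off against the requirement that $x \in L(C)$, which via a union bound over the $T$ time steps forces the step-$t$ acceptance probability to exceed $1/2T$ (the paper runs this last step contrapositively, deriving total acceptance probability below $1/2$ from the assumption of too many critical cells). The only deviation is your pigeonhole over local input patterns, which is superfluous: since every $i \in S'$ already satisfies $\Pr[E_i] \le 1 - 2^{-T^2}$, you can bound $\prod_{i \in S'} \Pr[E_i] \le (1-2^{-T^2})^{\abs{S'}}$ directly, without first grouping cells by their input neighborhood to equalize their acceptance probabilities.
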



\begin{proof}
  For $i \in [n]$, let $E_i$ denote the event in which the $i$-th cell accepts
  in step $t$.
  Assume towards a contradiction that there is $x \in L(C)$ in which strictly
  more than $2T \cdot (1+\log T) \cdot 2^{T^2} = 2^{O(T^2)}$ cells are critical
  for $x$ (in step $t$).
  By inspection, this implies there is a set $K \subseteq [n]$ of $\abs{K} \ge
  (1+\log T) \cdot 2^{T^2}$ cells such that every $k \in K$ is critical for $x$
  and, for every distinct $i,j \in K$, $\abs{i-j} \ge 2T$.
  (For instance, in the extreme case where every $0 \le i < 2T \cdot 2^{T^2}$
  is critical, pick $K = \{ 2jT \mid j \in [2^{T^2}] \}$.)
  Since there are at most $T^2$ coin tosses that determine whether a cell $i
  \in K$ accepts or not (i.e., those in the $T$-lightcone of $i$), $\Pr[E_i] <
  1$ if and only if $\Pr[E_i] \le 1 - 2^{-T^2}$.
  By \cref{lem_independence}, the events $E_i$ are all independent, implying
  \[
    \Pr[\text{$C$ accepts $x$ in step $t$}]
    \le \prod_{i \in K} \Pr[E_i]
    \le \left(1 - \frac{1}{2^{T^2}}\right)^{\abs{K}}
    < \frac{1}{e^{1+\log T}}
    < \frac{1}{2T}.
  \]
  Since $t$ was arbitrary, by a union bound it follows that the probability that
  $C$ accepts $x$ in step $t$ is strictly less than $1/2$.
  This contradicts $x \in L(C)$ both when $C$ is a one- and a two-sided error
  PACA.
\end{proof}



\subsection{Characterization of One-Sided Error PACA}
\label{sec_char_onesided}

\restatethmCharOnesided*

\cref{lem_critical_cells} implies that, given any $x \in L(C)$, the decision of
$C$ accepting can be traced back to a set $K$ of critical cells where $\abs{K}$
is constant.
To illustrate the idea, suppose that, if $C$ accepts, then it always does so in
a fixed time step $t < T$; in addition, assume the cells in $K$ are all far
apart (e.g., more than $2(T-1)$ cells away from each other as in
\cref{lem_independence}).

Let us \emph{locally} inspect the space-time diagram of $C$ for $x$, that is, by
looking at the $t$-lightcone of each cell $i$.
Then we notice that, if $i \in K$, there is a choice of random bits in the
$t$-lightcone of $i$ that causes $i$ to accept; conversely, if $i \notin K$,
then \emph{any} setting of random bits results in $i$ accepting.
Consider how this changes when $x \notin L(C)$ while assuming $K$ remains
unchanged.
Every cell $i \notin K$ still behave the same; that is, it accepts regardless
of the random input it sees.
As for the cells in $K$, however, since they are all far apart, it cannot be the
case that we still find random bits for every $i \in K$ that cause $i$ to
accept; otherwise $C$ would accept $x$ with non-zero probability, contradicting
the definition of one-sided error PACA.
Hence, there must be at least some $i \in K$ that \emph{never} accepts.
In summary, (under these assumptions) we can locally distinguish $x \in L(C)$
from $x \notin L(C)$ by looking at the cells of $K$ and checking whether, for
every $i \in K$, there is at least one setting of the random bits in the
$t$-lightcone of $i$ that causes it to accept.

To obtain the proof, we must generalize this idea to handle the case where the
cells in $K$ are not necessarily far from each other---which in particular means
we can no longer assume that the events of them accepting are independent---as
well as of $K$ varying with the input.
Since \cref{lem_critical_cells} only gives an upper bound for critical cells
when the input is in $L(C)$, we must also account for the case where the input
is not in $L(C)$ and $\abs{K}$ exceeds said bound.

\begin{proof}
  Let $T \in \N_+$ be the time complexity of $C$, and let $M$ be the upper bound
  on the number of critical cells (for inputs in $L(C)$) from
  \cref{lem_critical_cells}.
  Without restriction, we may assume $T > 1$.
  We first give the construction for $C'$ and then prove its correctness.

  \proofsubparagraph{Construction.}
  Given an input $x \in \Sigma^n$, the automaton $C'$ operates in two phases.
  In the first one, the cells communicate so that, in the end, each cell is
  aware of the inputs in its $r$-neighborhood, where $r = (2M-1)(T-1)$.
  (Note this is possible because $r$ is constant.)
  The second phase proceeds in $T$ steps, with the cells assuming accepting
  states or not depending on a condition we shall describe next.
  After the second phase is over (and $C'$ has not yet accepted), all cells
  unconditionally enter a non-accepting state and maintain it indefinitely.
  Hence, $C'$ only ever accepts during the second phase.

  We now describe when a cell $i \in [n]$ is accepting in the $t$-th step of the
  second phase, where $t \in [T]$.
  Let $K$ be the set of critical cells of $x$ in step $t$.
  The decision process is as follows:
  \begin{enumerate}
    \item First the cell checks whether there are strictly more than $M$ cells
    in its $r$-neighborhood $N$ that are critical in step $t$ of $C$.
    If $i$ cannot determine this for any cell $j \in N$ (since it did not
    receive the entire $t$-neighborhood of $j$ during the first phase), then $j$
    is simply ignored.
    If this is the case, then $i$ assumes a non-accepting state (in the $t$-th
    step of the second phase).
    \item The cell then determines whether it is critical itself in step $t$ of
    $C$.
    If this is \emph{not} the case, then it becomes accepting if and only if it
    is accepting in step $t$ of $C$ (regardless of the random input).
    \item Otherwise $i$ is critical in step $t$.
    Let $B_i \subseteq [n]$ be the subset of cells that results from the
    following sequence of operations:
    \begin{enumerate}
      \item Initialize $B_i$ to $\{ i \}$.
      \item For every cell $j \in B_i$, add to $B_i$ every $k \in K$ such that
      $\abs{j-k} \le 2(T-1)$.
      \item Repeat step $2$ until a fixpoint is reached.
    \end{enumerate}
    (This necessarily terminates since there are at most $M$ critical cells in
    $N$ and we checked the upper bound of $M$ previously.)
    By choice of $r$, we have that $\abs{i-j} \le 2(M-1)(T-1)$ for every $j \in
    B_i$.
    In particular, we have that the $t$-neighborhood of every $j \in B_i$ is
    completely contained in $N$, which means cell $i$ is capable of determining
    $B_i$.\footnote{%
      Note it is not necessary for $i$ to be aware of the actual numbers of the
      cells in $B_i$; it suffices for it to compute their positions relative to
      itself.
      For example, if $i=5$ and $B_i = \{ 4,5 \}$, then it suffices for $i$ to
      regard $j=4$ as cell $-1$ (relative to itself).
      Hence, by \enquote{determining $B_i$} here we mean that $i$ computes only
      these relative positions (and not the absolute ones, which would be
      impossible to achieve in only constant time).
    }
    The cell $i$ then accepts if and only if there is a setting of random bits
    in the lightcone $L_i$ of radius $r$ and height $t$ centered at $i$ that
    causes every cell in $B_i$ to accept in step $t$ of $C$.
  \end{enumerate}
  This process repeats itself in every step $t$ of the second phase.
  Note that it can be performed by $i$ instantaneously (i.e., without requiring
  any additional time steps of $C'$) since it can be hardcoded into the local
  transition function $\delta$.

  \proofsubparagraph{Correctness.}
  It is evident that $C'$ is a constant-time PACA, so all that remains is to
  verify its correctness.
  To that end, fix an input $x$ and consider the two cases:
  \begin{description}
    \item[$x \in L(C)$.]
    Then there is a random input $R$ such that $C$ accepts $x$ in step $t \in
    [T]$.
    This means that, for every critical cell $i \in K$, if we set the random
    bits in $L_i$ according to $R$, then every cell in $B_i$ accepts in step $t$
    of $C$.
    Likewise, every cell $i \notin K$ is accepting in step $t$ of $C$ by
    definition.
    In both cases we have that $i$ also accepts in the $t$-th step of the second
    phase of $C'$, thus implying $x \in L(C')$.

    \item[$x \notin L(C)$.]
    Then, for every random input $R$ and every step $t \in [T]$, there is at
    least one cell $i \in [n]$ that is not accepting in the $t$-th step of
    $C(x,R)$.
    If $i$ is not critical, then $i$ is also not accepting in the $t$-th step
    of the second phase of $C'$ (regardless of the random input), and thus $C'$
    also does not accept $x$.
    Hence, assume that every such $i$ (i.e., every $i$ such that there is a
    random  input $R$ for which $i$ is not accepting in the $t$-th step of
    $C(x,R)$) is a critical cell.

    Let $J \subseteq [n]$ denote the set of all such cells and, for $i \in J$,
    let $D_i \subseteq J$ be the subset that contains every $j \in J$ such that
    the events of $i$ and $j$ accepting in step $t$ are \emph{not} independent
    (conditioned on the random input to $C$).
    In addition, let $A_i$ denote the event in which every cell of $D_i$ is
    accepting in step $t$ of $C(x,U_{T \times n})$.
    We show the following:
    \begin{claim*}
      There is an $i \in J$ such that $\Pr[A_i] = 0$; that is, for every $R$,
      there is at least one cell in $D_i$ that is not accepting in step $t$ of
      $C(x,R)$.
    \end{claim*}
    This will complete the proof since then $i$ is also not accepting in the
    $t$-th step of the second phase of $C'$ (since any cell in $D_i$ is
    necessarily at most $2(M-1)(T-1)$ cells away from $i$), thus implying $x
    \notin L(C')$.

    To see the claim is true, suppose towards a contradiction that, for every $i
    \in J$, we have $\Pr[A_i] > 0$.
    If there are $i,j \in J$ such that $j \notin D_i$ (and similarly $i \notin
    D_j$), then by definition
    \[
      \Pr[C(x,U_{T \times n}) = 1]
      \ge \Pr[A_i \land A_j]
      = \Pr[A_i] \Pr[A_j]
      > 0,
    \]
    contradicting $x \notin L(C)$.
    Thus, there must be $i \in J$ such that $J = D_i$; however, this then
    implies
    \[
      \Pr[C(x,U_{T \times n}) = 1] = \Pr[A_i] > 0.
    \]
    As this is also a contradiction, the claim (and hence the theorem) follows.
    \qedhere
  \end{description}
\end{proof}

\subsection{Characterization of Two-Sided Error PACA}
\label{sec_char_twosided}

This section in divided into two parts.
In the first, we introduce the class $\LLT$ of locally linearly testable
languages and relate it to other classes of subregular languages.
The second part covers the proof of \cref{thm_char_twosided_PACA} proper.

\subsubsection{Local Languages}
\label{sec_local_languages}

We introduce some notation.
For $\ell \in \N_0$ and a word $w \in \Sigma^\ast$, $p_\ell(w)$ is the prefix of
$w$ of length $\ell$ if $\abs{w} \ge \ell$, or $w$ otherwise; similarly,
$s_\ell(w)$ is the suffix of length $\ell$ if $\abs{w} \ge \ell$, or $w$
otherwise.
The set of infixes of $w$ of length (exactly) $\ell$ is denoted by $I_\ell(w)$.

The subregular language classes from the next definition are due to
\citeauthor{mcnaughton71_counter-free_book}
\cite{mcnaughton71_counter-free_book} and
\citeauthor{beauquier89_factors_icalp}
\cite{beauquier89_factors_icalp}.

\begin{definition}[SLT, LT, LTT]%
  \label{def_SLT_LT_LTT}%
  A language $L \subseteq \Sigma^\ast$ is \emph{strictly locally testable} if
  there is $\ell \in \N_+$ and sets $\pi, \sigma \subseteq \Sigma^{\le \ell}$
  and $\mu \subseteq \Sigma^\ell$ such that, for every $w \in \Sigma^\ast$, $w
  \in L$ if and only if $p_{\ell-1}(w) \in \pi$, $I_\ell(w) \subseteq \mu$,  and
  $s_{\ell-1}(w) \in \sigma$.
  The class of all such languages is denoted by $\SLT$.

  A language $L \subseteq \Sigma^\ast$ is \emph{locally testable} if there is
  $\ell \in \N_+$ such that, for every $w_1, w_2 \in \Sigma^\ast$ with
  $p_{\ell-1}(w_1) = p_{\ell-1}(w_2)$, $I_\ell(w_1) = I_\ell(w_2)$, and
  $s_{\ell-1}(w_1) = s_{\ell-1}(w_2)$, we have that $w_1 \in L$ if and only if
  $w_2 \in L$.
  The class of locally testable languages is denoted by $\LT$.

  A language $L \subseteq \Sigma^\ast$ is \emph{locally threshold testable} if
  there are $\theta,\ell \in \N_+$ such that, for any two words $w_1, w_2 \in
  \Sigma^\ast$ for which the following conditions hold, $w_1 \in L$ if and only
  if $w_2 \in L$:
  \begin{enumerate}
    \item $p_{\ell-1}(w_1) = p_{\ell-1}(w_2)$ and $s_{\ell-1}(w_1) =
    s_{\ell-1}(w_2)$.
    \item For every $m \in \Sigma^\ell$, if $|w_i|_m < \theta$ for any $i \in
    \{1,2\}$, then $|w_1|_m = |w_2|_m$.
  \end{enumerate}
  The class of locally threshold testable languages is denoted by $\LTT$.
\end{definition}

The class $\LT$ equals the closure of $\SLT$ under Boolean operations (i.e.,
union, intersection, and complement) and the inclusion $\SLT \subsetneq \LT$ is
proper.
As for $\LTT$, it is well-known that it contains every language
\[
  \Th(m, \theta) = \{ w \in \Sigma^\ast \mid \abs{w}_m \le \theta \}
\]
where $m \in \Sigma^\ast$ and $\theta \in \N_0$.
Also, we have that $\LT \subsetneq \LTT$ and that $\LTT$ is closed under Boolean
operations.
We write $\SLT_\cup$ for the closure of $\SLT$.

\begin{definition}[LLT]%
  \label{def_llt}%
  For $\ell \in \N_0$, $\theta \in \R_0^+$, $\pi, \sigma \subseteq \Sigma^{\le
  \ell-1}$, and $\alpha\colon \Sigma^\ell \to \R_0^+$, 
  $\LLin_\ell(\pi,\sigma,\alpha,\theta)$ denotes the language of all $w \in
  \Sigma^+$ that satisfy $p_{\ell-1}(w) \in \pi$, $s_{\ell-1}(w) \in \sigma$,
  and
  \[
    \sum_{m \in \Sigma^\ell} \alpha(m) \cdot \abs{w}_m \le \theta.
  \]
  A language $L \subseteq \Sigma^+$ is said to be \emph{locally linearly
  testable} if there are $\ell$, $\pi$, $\sigma$, $\alpha$, and $\theta$ as
  above such that $L = \LLin_\ell(\pi,\sigma,\alpha,\theta)$.
  We denote the class of all such languages by $\LLT$.
\end{definition}

We write $\LLT_\cup$ for the closure of $\LLT$ under union and $\LLT_{\cup\cap}$
for its closure under both union and intersection.

\begin{proposition}[restate=restatepropLLTStructure,name=]%
  \label{prop_llt_structure}%
  The following hold (see \cref{fig_diag_llt}):
  \begin{enumerate}
    \item $\SLT \subsetneq \LLT \subsetneq \LLT_\cup \subseteq \LLT_{\cup\cap}
    \subsetneq \LTT$ and $\SLT_\cup \subsetneq \LLT_\cup$.
    \item $\LLT$ and $\LT$ as well as $\LLT_\cup$ and $\LT$ are incomparable.
    \item $\LTT$ equals the Boolean closure of $\LLT$.
  \end{enumerate}
\end{proposition}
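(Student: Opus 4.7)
The plan is to handle the three parts separately, climbing up the containment chain, then addressing incomparability with $\LT$, and finally the Boolean-closure characterization of $\LTT$. The main technical obstacle will be the strict inclusion $\LLT_{\cup\cap} \subsetneq \LTT$.

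For Part 1, the inclusion $\SLT \subseteq \LLT$ follows from the direct translation $\alpha(m) := 1$ for $m \notin \mu$, $\alpha(m) := 0$ otherwise, and $\theta := 0$. The language $L_{\le 2} := \{w \in \{0,1\}^+ : |w|_1 \le 2\}$ is an immediate $\LLT$ language and a separator for both $\SLT \subsetneq \LLT$ and $\SLT_\cup \subsetneq \LLT_\cup$: if $L_{\le 2}$ were a finite union of $\SLT$ languages, some component would contain $w_n := 0^n 1 0^n 1 0^n$ for infinitely many $n$, and for such $n$ large enough the prefix, suffix, and $\ell$-infix set of $w_n$ coincide with those of $w'_n := 0^n 1 0^n 1 0^n 1 0^n$, forcing $w'_n$ into that same $\SLT$ component and thus into $L_{\le 2}$ despite $|w'_n|_1 = 3$. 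For $\LLT \subsetneq \LLT_\cup$ I would use $L_\cup := \{w : |w|_0 \le 2\} \cup \{w : |w|_1 \le 2\}$: in a hypothetical $\LLT$ representation, the words $0^n, 1^n \in L_\cup$ force $\alpha(0^\ell) = \alpha(1^\ell) = 0$, after which the sum for $w = 0^a 1^b$ (with $a, b \ge \ell - 1$) is a constant independent of $a$ and $b$, so membership depends only on the prefix/suffix and cannot agree with $L_\cup$ containing $0^2 1^{100}$ but not $0^{100} 1^{100}$. For $\LLT \subseteq \LTT$, I would pick the $\LTT$ threshold $\theta' := \lceil \theta / \alpha_{\min} \rceil + 1$ where $\alpha_{\min}$ is the smallest positive $\alpha$-coefficient; two words that are $\LTT$-equivalent at threshold $\theta'$ either agree on all counts of infixes $m$ with $\alpha(m) > 0$ (so the sums coincide) or both have count $\ge \theta'$ for some such infix (forcing both outside $L$). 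Closure of $\LTT$ under Boolean operations then yields $\LLT_{\cup\cap} \subseteq \LTT$.

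The hard part is the strict inclusion $\LLT_{\cup\cap} \subsetneq \LTT$. I take $L_{\ge 2} := \{w \in \{0,1\}^+ : |w|_1 \ge 2\}$, which is $\LTT$ by definition, and assume $L_{\ge 2} = \bigcup_i \bigcap_j L_{i,j}$ with $L_{i,j} = \LLin_{\ell_j}(\pi_j, \sigma_j, \alpha_j, \theta_j)$. Consider $w_n := 0^n 1 0^n 1 0^n \in L_{\ge 2}$. By pigeonhole there is $i^\star$ with $w_n \in \bigcap_j L_{i^\star, j}$ for an infinite set of $n$. For each $j$, along this subsequence the count $|w_n|_{0^{\ell_j}} = 3(n - \ell_j + 1)$ is unbounded, forcing $\alpha_j(0^{\ell_j}) = 0$; by the symmetry of the two isolated $1$'s in $w_n$, the remaining sum equals $2 S_j$, where $S_j := \sum_{i = 0}^{\ell_j - 1} \alpha_j(0^{\ell_j - 1 - i} 1 0^i)$ is the contribution of a single isolated $1$. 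Hence $2 S_j \le \theta_j$, so in particular $S_j \le \theta_j$. For any large $n$ in the subsequence, the companion word $u_n := 0^n 1 0^{2n+1}$ has one isolated $1$, sum exactly $S_j$, and the same prefix and suffix $0^{\ell_j - 1}$ as $w_n$, so $u_n \in L_{i^\star, j}$ for every $j$. Therefore $u_n \in L_{\ge 2}$, contradicting $|u_n|_1 = 1$.

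For Part 2, $L_{\le 2}$ also witnesses $\LLT \not\subseteq \LT$ via the pair $(0^n 1 0^n 1 0^n, 0^n 1 0^n 1 0^n 1 0^n)$, which has identical prefix, suffix, and $\ell$-infix set for any $\ell$ but differing numbers of $1$'s. For $\LT \not\subseteq \LLT_\cup$ I propose $L_{\text{both}} := \{w : |w|_0 \ge 1 \text{ and } |w|_1 \ge 1\} = \overline{\{0\}^+ \cup \{1\}^+} \in \LT$: any $\LLT_\cup$ representation has a component $L_{i^\star}$ containing $0^n 1 0^n$ for infinitely many $n$, which forces $\alpha_{i^\star}(0^{\ell_{i^\star}}) = 0$; then the all-zero companion $0^{2n+1}$ has the same prefix and suffix $0^{\ell_{i^\star} - 1}$ and trivially satisfies the threshold (sum $0 \le \theta_{i^\star}$), so it also lies in $L_{i^\star}$ and hence in $L_{\text{both}}$, contradicting its lack of a $1$. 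Finally, for Part 3, one direction follows from $\LLT \subseteq \LTT$ (Part 1) combined with Boolean closure of $\LTT$; for the converse, I would decompose each $L \in \LTT$ with parameters $(\theta, \ell)$ as a finite union of atoms specifying the prefix, the suffix, and the truncated count $\min(|w|_m, \theta)$ for each $m \in \Sigma^\ell$, and observe that every atom is a Boolean combination of the $\LLT$ languages $\{w : |w|_m \le c\}$ together with the prefix and suffix conditions (themselves $\SLT \subseteq \LLT$).
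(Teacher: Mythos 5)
Most of your argument is sound, and in two places you take a genuinely different route from the paper. For $\LLT \subseteq \LTT$ you argue directly with the threshold $\theta' = \lceil \theta/\alpha_{\min}\rceil + 1$, whereas the paper runs an induction on the number of nonzero coefficients; your version is shorter and works. For the strictness $\LLT_{\cup\cap} \subsetneq \LTT$ you give a direct combinatorial proof via $\{w : \abs{w}_1 \ge 2\}$ (pigeonhole onto one intersection component, kill $\alpha_j(0^{\ell_j})$, compare $0^n10^n10^n$ with $0^n10^{2n+1}$), whereas the paper defers this strictness entirely to the PACA machinery of \cref{thm_char_twosided_PACA}; your proof is correct and self-contained, which is a genuine plus. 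The remaining parts ($\SLT \subseteq \LLT$, the separators for $\SLT_\cup \subsetneq \LLT_\cup$ and for the two incomparability claims, and the Boolean-closure characterization of $\LTT$) match the paper's approach up to cosmetic changes.

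There is, however, a genuine gap in your proof of $\LLT \subsetneq \LLT_\cup$. Your "constant sum" claim for $w = 0^a1^b$ applies only when $a \ge \ell-1$ (otherwise the prefix is not $0^{\ell-1}$ and the infix multiset changes), but for $\ell \ge 4$ every such word has at least three $0$s and at least three $1$s and hence lies \emph{outside} $L_\cup = \Th(0,2)\cup\Th(1,2)$. So the family on which your sum is constant contains no positive instance, and a representation can exclude all of it simply by making the transition weight $\sum_{i}\alpha(0^i1^{\ell-i})$ exceed $\theta$; the intended positive witness $0^21^{100}$ is not comparable to $0^{100}1^{100}$ in prefix or infix counts once $\ell > 3$, so no contradiction results. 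The problem is deeper than the bookkeeping: in your binary separator the bulk of the majority symbol in a word like $1^M01^M01^M$ is carried entirely by the infix $1^\ell$, whose weight is forced to $0$, so there is no mechanism to blow up the weight of a word that must remain in the language---which is the engine any such argument needs.

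The paper's separator, $\Th(1,1)\cup\Th(2,1)$ over the \emph{ternary} alphabet $\{0,1,2\}$, is chosen precisely to restore that mechanism: the counted symbols occur isolated in a sea of padding $0$s, so each occurrence of a $1$ contributes the fixed weight $S_1 = \sum_i \alpha(0^i10^{\ell-1-i})$. Comparing $0^\ell 10^\ell(20^\ell)^2 \in L$ with $0^\ell(10^\ell)^2(20^\ell)^2 \notin L$ forces $S_1 > 0$, and then $0^\ell(10^\ell)^K 2 0^\ell$ has weight at least $K S_1 \to \infty$ while staying in $\Th(2,1) \subseteq L$, a contradiction. To repair your proof you should either adopt a separator of this padded, multi-letter form or give a genuinely new argument for the binary language (for which it is not even clear that membership in $\LLT$ fails).
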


\begin{figure}
  \centering
  \includestandalone[scale=.8]{figures/fig_diag_llt}
  \caption{Placement of $\LLT$ in the subregular hierarchy.
    Arrows indicate inclusion relations; all are known to be strict except for
    the marked one (i.e., $\LLT_\cup \subseteq \LLT_{\cup\cap}$).
    Dashed lines between two classes denote they are incomparable.
  }
  \label{fig_diag_llt}
\end{figure}

Some relations between the classes are still open (see \cref{fig_diag_llt}) and
are left as a topic for future work.

\begin{proof}
  For the first item, note the inclusions $\LLT \subseteq \LLT_\cup$ and
  $\LLT_\cup \subseteq \LLT_{\cup\cap}$ are trivial, so we need only prove $\SLT
  \subseteq \LLT \subseteq \LTT$.
  The inclusion $\LLT_{\cup\cap} \subseteq \LTT$ then follows from $\LTT$ being
  closed under union and intersection, while $\SLT_\cup \subseteq \LLT_\cup$
  follows directly from $\SLT \subseteq \LLT$.
  The strictness of the inclusions follow all from the second item in the claim
  (since $\LLT$ then contains some language $L \notin \LT$, which is certainly
  not in $\SLT_\cup \subseteq \LT$); the only exceptions are $\LLT \subsetneq
  \LLT_\cup$, which we address further below, and $\LLT_{\cup\cap} \subsetneq
  \LTT$, which is proved in \cref{thm_char_twosided_PACA} (and does not depend
  on the results here).

  The first inclusion $\SLT \subseteq \LLT$ is easiest.
  Given $L \in \SLT$, we know there is $\ell \in \N_+$ so that $L$ is defined
  based on sets of allowed prefixes $\pi \subseteq \Sigma^{\le \ell}$, infixes
  $\mu \subseteq \Sigma^\ell$, and suffixes $\sigma \subseteq \Sigma^{\le
  \ell}$.
  Clearly we have then $L = \LLin_\ell(\pi,\sigma,\alpha,\theta)$ for $\theta =
  1/2$ and
  \[
    \alpha(m) = \begin{cases}
      0, & m \in \mu \\
      1, & m \notin \mu.
    \end{cases}
  \]

  For the inclusion $\LLT \subseteq \LTT$, let $L =
  \LLin_\ell(\pi,\sigma,\alpha,\theta)$ be given.
  The proof is by induction on the number $k$ of words $m \in \Sigma^\ell$ such
  that $\alpha(m) \neq 0$.
  If $k = 0$, then the linear condition of $L$ is always satisfied, directly
  implying $L \in \LTT$ (or, better yet, $L \in \SLT$).
  For the induction step, suppose the claim has been proven up to a number $k$,
  and let there be $k+1$ words $m \in \Sigma^\ell$ with $\alpha(m) \neq 0$.
  In addition, let $\mu \in \Sigma^\ell$ with $\alpha(\mu) \neq 0$ be arbitrary,
  and let $r \in \N_0$ be maximal with $r \alpha(\mu) \le \theta$.
  Consider the languages $L_i = \LTT_\ell(\pi,\sigma,\alpha_i,\theta_i)$ for $i
  \in [0,r]$, $\theta_i = \theta - i \alpha(\mu)$, and $\alpha_i$ such that, for
  every $m \in \Sigma^\ell$,
  \[
    \alpha_i(m) = \begin{cases}
      \alpha(m), & m \neq \mu; \\
      0, & m = \mu.
    \end{cases}
  \]
  By the induction hypothesis, the $L_i$ are all in $\LTT$.
  On the other hand, we have
  \[
    L = \bigcup_{i=0}^r {(\Th(\mu,i) \cap L_i) \setminus \Th(\mu,i-1)},
  \]
  which implies $L \in \LTT$ since $\LTT$ is closed under Boolean operations.

  For the strictness of the inclusion $\LLT \subsetneq \LLT_\cup$, consider the
  language $L = \Th(1,1) \cup \Th(2,1)$ over the ternary alphabet $\Sigma = \{
  0,1,2 \}$.
  Clearly we have $L \in \LLT_\cup$, so assume that $L =
  \LLin_\ell(\pi,\sigma,\alpha,\theta)$ holds for some choice of $\ell$, $\pi$,
  $\sigma$, $\alpha$, and $\theta$.
  Then necessarily $\alpha(w) > 0$ for some $w \in \binalph^\ell$ with
  $\abs{w}_1 = 1$ since otherwise we would have a contradiction to $0^\ell 1
  0^\ell (2 0^\ell)^2 \in L$ and $0^\ell (1 0^\ell)^2 (2 0^\ell)^2 \notin L$.
  However, we then get that $0^\ell (1 0^\ell)^{\ceil{1/\alpha(w)}} 2 0^\ell
  \notin L$, which is a contradiction.

  For the second item in the claim, we show there are $L_1$ and $L_2$ such that
  $L_1 \in \LLT \setminus \LT$ and $L_2 \in \LT \setminus \LLT_\cup$.
  For the first one, probably the simplest is to set $L_1 = \Th(1,1)$ (where the
  underlying alphabet is $\Sigma = \binalph$).
  The language is clearly not in $\LT$ (because otherwise either \emph{both}
  $0^n10^n$ and $0^n10^n10^n$ would be in $L_1$ or not (for sufficiently large
  $n$)); meanwhile, we have $L_1 \in \LLT$ since $\ell = 1$, $\pi = \sigma = \{
  \varepsilon \}$, $\alpha(0) = 0$, $\alpha(1) = 1$, and $\theta = 1$ satisfy
  $L_1 = \LLin_\ell(\pi,\sigma,\alpha,\theta)$.
  As for $L_2$, consider
  \[
    L_2 = \{ w \in \binalph^\ast \mid w \notin \{0\}^\ast \},
  \]
  which is in $\LT$ because $L_2 = \binalph^\ast \setminus \{0\}^\ast$ (and
  $\LT$ is closed under Boolean operations).
  To argue that $L_2 \notin \LLT_\cup$, suppose towards a contradiction that
  $L_2 = \bigcup_{i=1}^k L_i'$ for $L_i' =
  \LLin_{\ell_i}(\pi_i,\sigma_i,\alpha_i,\theta_i)$.
  Since $k$ is finite, there must be at least one $L_i'$ that contains
  infinitely many words of the form $0^j10^j$ for $j \in \N_0$.
  This implies $0^{\ell_i-1} \in \pi_j,\sigma_j$ as well as $\alpha(0^{\ell_i})
  = 0$, from which it follows that
  \[
    \sum_{m \in \binalph^{\ell_i}} \alpha_i(m) \cdot \abs{0^k}_m
    = \alpha_i(0^{\ell_i}) \cdot \abs{0^k}_{0^{\ell_i}}
    = 0
    \le \theta_i
  \]
  for any $k \in \N_0$, thus contradicting $0^k \notin L_i'$.

  Finally, the third item directly follows from the following well-known
  characterization of $\LTT$:
  A language $L$ is in $\LTT$ if and only if it can be expressed as the Boolean
  combination of languages of the form $\Th(m,\theta)$, $\pi \Sigma^\ast$, and
  $\Sigma^\ast \sigma$ where $m, \pi, \sigma \in \Sigma^\ast$ and $\theta \in
  \N_0$.
  Obviously these three types of languages are all contained in $\LLT$; since
  $\LLT \subseteq \LTT$, it follows that $\LTT$ equals the Boolean closure of
  $\LLT$.
\end{proof}

\subsubsection{The Proof}
\label{sec_proof_thm_char_twosided}

With the terminology of \cref{sec_local_languages} in place, we now turn to:

\restatethmCharTwosided*

The theorem is proven by showing the two inclusions.
We first give a brief overview of the ideas involved.

\subparagraph{First inclusion.}
The first step is showing 
that we can \enquote{tweak} the components of the $\LLT$ condition so that it is
more amenable to being tested by a PACA (\cref{lem_simpl_LLin}).
In particular, we prove we can assume the $\alpha(m)$ weights are such that
$2^{-\alpha(m)}$ can be represented using a constant number of bits.
Having done so, the construction is more or less straightforward:
We collect the subwords of length $\ell$ in every cell and then accept with the
\enquote{correct} probabilities; that is, if a cell sees a subword $m$, then it
accepts with probability $2^{-\alpha(m)}$.
To lift the inclusion from $\LLT$ to $\LLT_{\cup\cap}$, we show that the class
of languages that are recognizable by constant-time two-sided error PACAs is
closed under union and intersection (\cref{prop_closure_union}).

\subparagraph{Second inclusion.}
The second inclusion
(i.e., showing that $L(C) \in \LTT$ for every constant-time two-sided error
PACA $C$)
is considerably more complex.
Let $C$ be a PACA with time complexity at most $T$.
The proof again bases on the class $\LLT$ and uses the fact that $\LTT$ equals
the Boolean closure over $\LLT$ (which we show as a separate result):
\begin{enumerate}
  \item As a warm-up, we consider the case where the cells of $C$ accept all
  independently from one another.
  In addition, we assume that, if $C$ accepts, then it does so in a fixed time
  step $t < T$.
  The argument is then relatively straightforward since we need only consider
  subwords of length $\ell = 2T+1$ and set their $\LLT$ weight according to the
  acceptance probability that the cell in the middle of the subword would have
  in $C$.
  \item Next we relax the requirement on independence between the cells (but
  still assume a fixed time step for acceptance).
  The situation then requires quite a bit of care since the $\LLT$ condition
  does not account for subword overlaps at all.
  For instance, there may be cells $c_1$ and $c_2$ that are further than $T$
  cells apart and that both accept with non-zero probability but where $c_i$
  accepts if and only if $c_{3-i}$ does not (see
  \cref{appx_example_daca_vs_paca}
  for a concrete example).
  We solve this issue by blowing up $\ell$ so that a subword covers not only a
  single cell's neighborhood but that of an \emph{entire group} of cells whose behavior may be correlated with one other.
  Here we once more resort to \cref{lem_critical_cells,lem_independence} to
  upper-bound the size of this neighborhood by a constant.
  \item Finally, we generalize what we have shown so that it also holds in the
  case where $C$ may accept in any step $t < T$.
  This is the only part in the proof where closure under complement is required.
  The argument bases on generalizing the ideas of the previous step to the case
  where the automaton may accept in multiple time steps and then applying the
  inclusion-exclusion principle.
\end{enumerate}

We now elaborate on these ideas.
As discussed above, we first need to make the linear condition of $\LLT$ a bit
more manageable:

\begin{lemma}[restate=restatelemSimplLLin,name=]%
  \label{lem_simpl_LLin}%
  For any $L = \LLin_\ell(\pi,\sigma,\alpha,\theta)$,
  there is $\alpha'$ such that $L = \LLin_\ell(\pi,\sigma,\alpha',\theta')$ and:
  \begin{enumerate}
    \item The threshold $\theta'$ is equal to $1$.
    \item There is a constant $\eps > 0$ such that, for every $w \in
    \Sigma^\ast$, we have either $f'(w) < 1-\eps$ or $f'(w) > 1+\eps$, where
    $f'(w) = \sum_{m \in \Sigma^\ell} \alpha'(m) \cdot \abs{w}_m$.
    \item There is $k \in \N_0$ such that, for every $m$, there is $n \in [2^k]$
    such that $\alpha'(m) = k - \log(n+1)$.
  \end{enumerate}
\end{lemma}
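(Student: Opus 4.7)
The plan is to prove the three conditions in sequence: first normalize to $\theta' = 1$, then produce a strict gap between $L$ and its complement, and finally discretize the weights while preserving $L$. The main obstacle will be the discretization step, where we must keep the resulting parameter $k$ a \emph{constant}; this will work because both the gap $\eps_0$ obtained from the previous step and the smallest positive $\alpha(m)$ are strictly positive constants.

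After the trivial rescaling $\alpha \mapsto \alpha/\theta$ (giving condition (1)), the key observation for the gap is that for each $m$ with $\alpha(m) > 0$ and each $w \in L$ we have $\abs{w}_m \le 1/\alpha(m)$; hence $\{f(w) : w \in L\}$ is a finite set, and so is the set of values $f(w)$ just above $1$ over $w$ with valid prefix/suffix outside $L$. Consequently $M := \max_{w \in L} f(w)$ and $M' := \min\{f(w) : w \notin L, p_{\ell-1}(w) \in \pi, s_{\ell-1}(w) \in \sigma\}$ are both achieved, with $M \le 1 < M'$ (the degenerate cases where $L$ or its complement is empty are straightforward). Moving the threshold to $(M+M')/2$ and rescaling back to $1$ yields a constant gap $\eps_0 = (M'-M)/(M+M') > 0$: $f(w) \le 1 - \eps_0$ for $w \in L$ and $f(w) \ge 1 + \eps_0$ for $w \notin L$ satisfying the prefix/suffix constraints.

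I then discretize the weights. For $m$ with $\alpha(m) = 0$, I set $\alpha'(m) = 0 = k - \log 2^k$, which is of the required form; exact equality here is essential since $\abs{w}_m$ may be unbounded for such $m$ even when $w \in L$. For $m$ with $\alpha(m) > 0$, I set $\alpha'(m) = k - \log \lfloor 2^{k - \alpha(m)} \rfloor$, which is again of the required form and satisfies $\alpha'(m) \ge \alpha(m)$. The upward rounding yields $f'(w) \ge f(w) \ge 1 + \eps_0$ for $w \notin L$, giving both preservation and the gap on that side. For $w \in L$, the bound $\log_2(1+x) \le x/\ln 2$ gives $\alpha'(m) - \alpha(m) \le 2^{1 + \alpha(m) - k}/\ln 2$, and therefore $\alpha'(m) \le (1+\eta)\alpha(m)$ with $\eta = O(2^{-k}/\alpha_{\min})$, where $\alpha_{\min} := \min\{\alpha(m) : \alpha(m) > 0\} > 0$ is a constant (as $\Sigma^\ell$ is finite). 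Fixing $k$ as a constant large enough that $\eta \le \eps_0 / (2(1 - \eps_0))$ yields $f'(w) \le (1+\eta)(1-\eps_0) \le 1 - \eps_0/2$, establishing $L(\alpha') = L$ and condition (2) with $\eps = \eps_0/2$; condition (3) holds by construction.
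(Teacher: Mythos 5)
Your strategy coincides with the paper's: normalize the threshold, use the fact that $\{f(w) \mid f(w) \le b\}$ is finite to open a constant multiplicative gap around the threshold, and then round the weights up to the form $k - \log(n+1)$ for a sufficiently large constant $k$. Your discretization analysis is in fact slightly cleaner than the paper's: the term-by-term multiplicative bound $\alpha'(m) \le (1+\eta)\alpha(m)$ gives $f'(w) \le (1+\eta)f(w)$ for \emph{every} $w$ at once, whereas the paper's additive estimate has to invoke the bound $\abs{w}_m \le 1/a$ on the infix counts.

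There is, however, one concrete flaw. Condition (2) quantifies over \emph{every} $w \in \Sigma^\ast$, but you only establish the gap for $w \in L$ and for $w \notin L$ with valid prefix/suffix, because you define $M$ and $M'$ through membership in $L$ rather than through the threshold inequality alone. A word with an invalid prefix can have an $f$-value strictly inside $(M,M')$ and hence land in the forbidden band after your renormalization. For instance, take $\Sigma = \{a,b\}$, $\ell = 2$, $\pi = \{a\}$, $\sigma = \Sigma^{\le 1}$, $\alpha(bb) = 1.7$, $\alpha(ab) = 0.5$, all other weights $0$, and $\theta = 1.5$: any word starting with $a$ that contains $bb$ must also contain $ab$ and so has $f \ge 2.2$, which forces $M = 1.5$ and $M' = 2$, yet $f(bb) = 1.7 \in (M,M')$; since $1.7$ can be placed arbitrarily close to the new threshold $(M+M')/2 = 1.75$ by tuning $\alpha(bb)$, no $\eps$ derived from $M$ and $M'$ alone excludes it, and the (arbitrarily small) upward rounding does not move it out of the band. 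The repair is exactly the paper's choice: define $M = \max\{f(w) \mid f(w) \le \theta\}$ and $M' = \min\{f(w) \mid f(w) > \theta\}$ over \emph{all} words, so that no achievable $f$-value lies in $(M,M')$; since your inequalities $f(w) \le f'(w) \le (1+\eta)f(w)$ hold for every $w$, the rest of your argument then goes through verbatim. (Two further trivia: the initial division by $\theta$ is undefined when $\theta = 0$, but is redundant anyway once the threshold is recomputed as $(M+M')/2$; and your formula $\alpha'(m) = k - \log\lfloor 2^{k-\alpha(m)}\rfloor$ implicitly requires $k \ge \max_m \alpha(m) + 1$, which your "large enough $k$" covers.)
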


The first two items ensure the sum $f'(w)$ is always \enquote{far away} from
$1$.
In turn, the third item enables us to represent $2^{-\alpha'(m)}$ using at most
$k$ (and, in particular, $O(1)$ many) bits.

\begin{proof}
  The case where $\alpha(m) = 0$ for every $m$ is trivial, so suppose there is
  some $m$ for which $\alpha(m) > 0$.
  Because $\abs{w}_m \in \N_0$ for every $w$ and every $m$, $f(w) = \sum_{m \in
    \Sigma^\ell} \alpha(m) \cdot \abs{w}_m$ is such that, given any $b \in
  \R_0^+$, there are only finitely many values of $f(w) \le b$ (i.e., the set
  $\{ f(w) \mid f(w) \le b \}$ is finite).
  Hence, by setting
  \[
    r = \frac{1}{2} \left( \max_{f(w) \le \theta}{f(w)}
      + \min_{f(w) > \theta}{f(w)}
    \right)
  \]
  and $\alpha''(m) = \alpha(m) / r$, we have that $f''(w) = \sum_{m \in
    \Sigma^\ell} \alpha''(m) \cdot \abs{w}_m$ satisfies the second condition
  from the claim (i.e., for every $w \in \Sigma^\ast$, either $f''(w) < 1-\eps$
  or $f''(w) > 1+\eps$) for some adequate choice of $\eps > 0$.

  Now we show how to satisfy the last condition without violating the first two.
  Essentially, we will choose $k$ to be sufficiently large and then
  \enquote{round up} each $\alpha''(m)$ to the nearest value of the form $k -
  \log(n+1)$.
  To that end, let $a$ be the minimal $\alpha''(m)$ for which $\alpha''(m) > 0$.
  In addition, let $k \in \N_0$ be such that $\alpha''(m) < k/2$ for every $m$
  and that
  \[
    \log(2^{k/2} + 1) - \log(2^{k/2})
    = \log(2^{k/2} + 1) - \frac{k}{2}
    \le \frac{a\eps}{2\abs{\Sigma}^\ell}.
  \]
  (This is possible because $\log(n+1) - \log n$ tends to zero as $n \to
  \infty$ and the right-hand side is constant.)
  Then, for every $m$, we set $\alpha'(m) = k - \log(n+1)$ where $n \in [2^k]$
  is maximal such that $\alpha'(m) \ge \alpha''(m)$.
  By construction, $f'(w) \ge f''(w)$, so we need only argue that there is
  $\eps'>0$ such that, for every $w$ for which $f''(w) < 1-\eps$, we also have
  $f'(w) <  1-\eps'$.
  In particular every said $w$ must be such that, for every $m$, $\abs{w}_m
  \le 1/a$ (otherwise we would have $f''(w) > 1$).
  Noting that $\abs{\alpha'(m) - \alpha''(m)}$ is maximal when $\alpha'(m) =
  k/2$ and $\alpha''(m) = k - \log(2^{k/2}+1) + \delta$ for very small $\delta >
  0$, we observe that
  \[
    f'(w) = \sum_{m \in \Sigma^\ell} \alpha'(m) \cdot \abs{w}_m
    < f''(w) + \frac{\abs{\Sigma}^\ell}{a}
      \left( \log(2^{k/2} + 1) - \frac{k}{2} \right)
    < 1-\eps + \frac{\eps}{2}
    = 1 - \frac{\eps}{2},
  \]
  that is, $f'(w) < 1 - \eps'$ for $\eps' = \eps/2$, as desired.
\end{proof}

We also need the following result:

\begin{proposition}[restate=restatePropClosureUnion,name=]%
  \label{prop_closure_union}%
  Let $C_1$ and $C_2$ be constant-time two-sided error PACA.
  Then there are constant-time two-sided error PACA $C_\cup$ and $C_\cap$ such
  that $L(C_\cup) = L(C_1) \cup L(C_2)$ and $C_\cap = L(C_1) \cap L(C_2)$.
\end{proposition}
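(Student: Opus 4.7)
The plan is to first amplify the acceptance probabilities of $C_1$ and $C_2$ and then combine them by a parallel simulation followed by a deterministic ``checking phase'' that cycles through candidate acceptance time steps. By \cref{prop_twosided_error_reduction}, I may replace $C_1$ and $C_2$ with equivalent PACAs whose error is bounded by any constant $p' < 1/6$; these are still constant-time, with time complexities say $T_1$ and $T_2$, and I set $T = \max\{T_1, T_2\}$.

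For both $C_\cup$ and $C_\cap$, each cell simulates both $C_1$ and $C_2$ in parallel during the first $T$ steps while maintaining the entire history of its own state in each of the two simulations. To supply two independent coin tosses per simulation step to each cell---one for the $C_1$-simulation and one for the $C_2$-simulation---I would use the trick mentioned right after \cref{def_PACA} of collecting coins in consecutive steps and interpreting them as the index of a local transition function; this incurs only a constant multiplicative overhead and guarantees that the two simulations are independent uniform random experiments.

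After this initial phase, the automata enter a \emph{checking phase} that operates purely on the stored histories. For $C_\cup$, the phase spans $T_1 + T_2$ steps: in step $k$ for $k \in [T_1]$ the cell is accepting iff it was accepting in the $C_1$-simulation at step $k$, and in step $k + T_1$ for $k \in [T_2]$ iff it was accepting in the $C_2$-simulation at step $k$. For $C_\cap$, the phase spans $T_1 \cdot T_2$ steps, cycling over pairs $(t_1, t_2) \in [T_1] \times [T_2]$; in the step indexed by $(t_1, t_2)$, the cell is accepting iff it was accepting in the $C_1$-simulation at $t_1$ \emph{and} in the $C_2$-simulation at $t_2$. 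After the checking phase the cells drop permanently into a non-accepting state. All decisions in the checking phase are purely local and can be hardcoded into the transition function. Correctness then follows because $C_\cup$ (resp.\ $C_\cap$) has a step with all cells simultaneously accepting iff some $t_1$ or $t_2$ (resp.\ some pair $(t_1, t_2)$) witnesses simultaneous acceptance in $C_1$ or $C_2$ (resp.\ in both), which is precisely the condition that $C_1$ or $C_2$ accepts (resp.\ both do).

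For the error analysis a union bound suffices: on a ``no''-instance of $L(C_1) \cup L(C_2)$, both $C_1$ and $C_2$ err with probability at most $p'$, yielding total error at most $2p' < 1/3$ for $C_\cup$; on a ``yes''-instance of $L(C_1) \cap L(C_2)$ both simulations must succeed, each failing with probability at most $p'$, so $C_\cap$ succeeds with probability at least $1 - 2p' > 2/3$. I expect the main subtlety to be ruling out the na\"ive approach for intersection in which a cell stores only the current pair of simulated states and accepts iff both coordinates are currently accepting: this would require $C_1$ and $C_2$ to accept at the \emph{same} step, a strictly stronger condition than $L(C_1) \cap L(C_2)$ membership. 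The $(t_1, t_2)$-cycling in the checking phase---and the per-cell history bookkeeping it requires---is precisely what addresses the mismatch between PACA acceptance (indexed by a single global time step) and the semantics of language intersection.
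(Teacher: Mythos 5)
Your proposal is correct and follows essentially the same route as the paper: amplify the error of $C_1$ and $C_2$ via \cref{prop_twosided_error_reduction}, simulate both in parallel with independent coin tosses, accept for union if either simulation does, and for intersection deterministically cycle over all pairs $(t_1,t_2)$ of candidate acceptance steps---which is exactly the paper's $T$-ary-counter mechanism inherited from the proof of \cref{prop_twosided_error_reduction}, and you correctly identify it as the key point. The only (immaterial) difference is that you store each cell's acceptance history and look it up in a checking phase, whereas the paper stores the random bits and deterministically re-runs the simulations for each counter value.
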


\begin{proof}
  We first address the closure under union.
  Using \cref{prop_twosided_error_reduction}, we may assume that $C_1$ and $C_2$
  are two-sided $\eps$-error PACA for some $\eps < 1/6$.
  As we show further below, it suffices to have $C_\cup$ simulate both $C_1$ and
  $C_2$ simultaneously (using independent random bits) and accept if either of
  them does.
  To realize this, we can simply adapt the construction from
  \cref{prop_onesided_error_reduction} using $m = 2$ to simulate one copy of
  $C_1$ and $C_2$ each (instead of two independent copies of the same PACA).

  The probability that $C_\cup$ accepts an $x \in L_i \setminus L_j$ for $i,j
  \in \{1,2\}$ and $i \neq j$ is at least $1-\eps$, and the probability that
  $C_\cup$ accepts $x \in L_1 \cap L_2$ is even larger (i.e., at least $1-\eps^2
  \ge 1-\eps$).
  Conversely, the probability that $C_\cup$ accepts an $x \notin L_1 \cup L_2$
  is
  \[
    \Pr[\text{$C_\cup$ accepts $x$}]
    \le \Pr[\text{$C_1$ accepts $x$}] + \Pr[\text{$C_2$ accepts $x$}]
    = 2\eps < \frac{1}{3}.
  \]
  Since $C_\cup$ has time complexity $2T = O(T)$, the claim follows.

  For the closure under intersection, we will use a similar strategy.
  This time suppose that the error probability $\eps$ of $C_1$ and $C_2$ is so
  that $(1-\eps)^2 \ge 2/3$ (e.g., $\eps = 1/10$ suffices).
  Again we let $C_\cap$ simulate two copies of $C_1$ of $C_2$ simultaneously;
  however, this time we use a simplified version of the construction from the
  proof of \cref{prop_twosided_error_reduction}.
  More specifically, we set $m = 2$ and leave out the $[M]$ component from the
  construction.
  That is, $C_\cap$ randomly (and independently) picks random inputs $r_1$ and
  $r_2$ for $C_1$ and $C_2$, respectively, and accepts if and only if $C_1$
  accepts with coin tosses from $r_1$ and also $C_2$ accepts with coin tosses
  from $r_2$.
  (One must also be careful since $C_1$ and $C_2$ may accept in different time
  steps, but this is already accounted for in the construction from
  \cref{prop_twosided_error_reduction}.)

  Since the copies of $C_1$ and $C_2$ are simulated independently from one
  another and $C_\cap$ accepts if and only both do (in the simulation), we have
  \[
    \Pr[\text{$C_\cap$ accepts $x$}]
    = \Pr[\text{$C_1$ accepts $x$}] \Pr[\text{$C_2$ accepts $x$}].
  \]
  In case $x \in L(C_1) \cap L(C_2)$, this probability is at least $(1-\eps)^2
  \ge 2/3$; otherwise it is upper-bounded by $\eps < 1/3$.
  Since $C$ has time complexity $O(T^2)$, which is constant, the claim follows.
\end{proof}

We are now in position to prove \cref{thm_char_twosided_PACA}.

\begin{proof}[Proof of \cref{thm_char_twosided_PACA}]
  We prove the two inclusions from the theorem's statement.
  The first one we address is that of $\LLT_{\cup\cap}$ in the class of
  constant-time two-sided error PACA.

  \proofsubparagraph{First inclusion.}
  Given $L = \LLin_\ell(\pi,\sigma,\alpha,\theta)$, we construct a constant-time
  two-sided error PACA $C$ with $L(C) = L$.
  This suffices since by the closure properties shown in
  \cref{prop_closure_union}.
  We apply \cref{lem_simpl_LLin} and assume $\theta=1$ and that there are $k$
  and $\eps$ as in the statement of \cref{lem_simpl_LLin}.
  For simplicity, we also assume $\ell = 2k+1$.

  The automaton $C$ operates in $k$ steps as follows:
  Every cell sends its input symbol in both directions as a signal and, at the
  same time, aggregates the symbols it sees, thus allowing it to determine the
  initial configuration $m \in \Sigma^\ell$ of its $k$-neighborhood.
  Meanwhile, every cell also collects $k$ random bits $r \in \binalph^k$.
  The decision to accept is then simultaneously made in the $k$-th step, where
  a cell with $k$-neighborhood $m$ accepts with probability $2^{-\alpha(m)}$
  (independently of other cells).
  (This can be realized, for instance, by seeing $r$ as the representation of a
  $k$-bit integer in $[2^k]$ and accepting if and only if $r \le n$, where $n$
  is such that $2^{-\alpha(m)} = (n+1)/2^k$.)
  In the case of the first (resp., last) cell of $C$, it also checks that the
  prefix (resp., suffix) of the input is in $\pi$ (resp., $\sigma$), rejecting
  unconditionally if this is not the case.

  Hence, for an input word $w \in \Sigma^+$, the probability that $C$ accepts is
  \[
    \prod_{m \in \Sigma^\ell} \left(\frac{1}{2^{\alpha(m)}}\right)^{\abs{w}_m}
    = 2^{-f(w)},
  \]
  where $f(w) = \sum_{m \in \Sigma^\ell} \alpha(m) \cdot \abs{w}_m$.
  Thus, if $w \in L$, then $C$ accepts with probability $2^{-f(w)} >
  (1/2)^{1-\eps}$; conversely, if $w \notin L$, the probability that $C$ accepts
  is $2^{-f(w)} < (1/2)^{1+\eps}$.
  Since $\eps$ is constant, we may apply \cref{prop_twosided_error_reduction}
  and reduce the error to $1/3$.

  \proofsubparagraph{Second inclusion.}
  The proof of the second inclusion is more involved.
  Let $C$ be a $T$-time two-sided error PACA for some $T \in \N_+$.
  We shall obtain $L(C) \in \LTT$ in three steps:
  \begin{enumerate}
    \item The first step is a warm-up where the cells of $C$ accept all
    independently from one another and that, if $C$ accepts, then it does so in
    a fixed time step $t < T$.
    \item Next we relax the requirement on independence between the cells by
    considering groups of cells (of maximal size) that may be correlated with
    one another regarding their acceptance.
    \item Finally, we generalize what we have shown so that it also holds in the
    case where $C$ may accept in any step $t < T$.
    This is the only part in the proof where closure under complement is
    required.
    (Here we use item $3$ of \cref{prop_llt_structure}.)
  \end{enumerate}

  \begin{description}
    \item[Step 1.] Suppose that $C$ only accepts in a fixed time step $t < T$
    and that the events of any two cells accepting are independent from one
    another.
    We show that $L(C) = \LLin(\pi,\sigma,\alpha,\theta)$ for an adequate choice
    of parameters.
     
    Set $\ell = 2t+1$ and let $p_m$ be the probability that a
    cell with $t$-neighborhood $m \in \Sigma^\ell$ accepts in step $t$. 
    In addition, let $\pi = \{ p_{\ell - 1}(w) \mid w \in L(C) \}$ and $\sigma =
    \{ s_{\ell - 1}(w) \mid w \in L(C) \}$ as well as $\theta = \log(3/2)$ and
    $\alpha(m) = \log(1/p_m)$ for $m \in \Sigma^\ell$.
    The probability that $C$ accepts a word $w \in \Sigma^+$ is
    \[
      \prod_{m \in \Sigma^\ell} p_m^{\abs{w}_m}
      = \prod_{m \in \Sigma^\ell}
        \left(\frac{1}{2^{\alpha(m)}}\right)^{\abs{w}_m}
      = 2^{-f(w)},
    \]
    which is at least $2/3$ if and only if $f(w) \le \theta$.
    It follows that $L(C) = \LLin_\ell(\pi,\sigma,\alpha,\theta)$.

    \item[Step 2.] We now relax the requirements from the previous step so that
    the events of any two cells accepting need no longer be independent from one
    another.
    ($C$ still only accepts in the fixed time step $t$.)
     
    Let $K = 2^{O(T^2)}$ be the upper bound from \cref{lem_critical_cells} and
    $\ell = 2(K+2)T$.
    Again, we set $\pi = \{ p_{\ell - 1}(w) \mid w \in L(C) \}$, $\sigma = \{
    s_{\ell - 1}(w) \mid w \in L(C) \}$, and $\theta = \log(3/2)$.
    As for $\alpha(m)$, we set $\alpha(m) = 0$ unless $m$ is such that there is
    $d \le K$ with
    \[
      m = abr_1s_1r_2s_2 \cdots r_ds_dc
    \]
    where $a \in \Sigma^T$ is arbitrary, $b \in \Sigma^{2T}$ contains \emph{no
    critical cells}, each $r_j \in \Sigma$ is a critical cell (for step $t$),
    the $s_j \in \Sigma^{\le 2T-1}$ are arbitrary, and $c \in \Sigma^\ast$ has
    length $\abs{c} \ge T$ and, if $c$ contains any critical cell, then this
    cell accepts independently from $r_d$.
    In addition, we require $c$ to be of maximal length with this property.

    Note we need $a$ as context to ensure that $b$ indeed does not contain 
    critical cells (since determining this requires knowledge of the states in
    the $T$-neighborhood of the respective cell); the same holds for $r_d$ and
    $c$.
    By construction and \cref{lem_independence}, the group of cells
    $r_1,\dots,r_d$ is such that (although its cells are not necessarily 
    independent from one another) its cells accepts independently from any other
    critical cell in $C$.
    Furthermore, $b$ ensures $m$ aligns properly with the group and that the
    group does not appear in any other infix.

    Letting $p_m$ be the probability that every one of the $r_j$ accept, for $m$
    as above we set $\alpha(m) = \log(1/p_m)$.
    Then, as before, the probability that $C$ accepts a word $w \in \Sigma^+$ is
    $2^{-f(w)}$, which is at least $2/3$ if and only if $f(w) \le \theta$, thus
    implying $L(C) = \LLin_\ell(\pi,\sigma,\alpha,\theta)$.

    \item[Step 3.] In this final step we generalize the argument so it also
    applies to the case where $C$ may accept in any time step $t < T$.
    First note that, given any $p>0$, if we set $\theta = \log(1/p)$ in the
    second step above (instead of $\log(3/2)$), then we have actually shown that
    \[
      \{ w \in \Sigma^+ \mid \Pr[\text{$C$ accepts $w$ in step $t$}] \ge p \}
      = \LLin_\ell(\pi,\sigma,\alpha,\theta).
    \]
    In fact, we can generalize this even further:
    Given any $\varnothing \neq \tau \subseteq [T]$, by setting $\alpha$
    adequately we can consider the acceptance probability for the steps in
    $\tau$ altogether:\footnote{%
    Of course we are being a bit sloppy here since \cref{def_PACA} demands that
    a PACA should halt whenever it accepts.
    What is actually meant is that, having fixed some random input, if we extend
    the space-time diagram of $C$ on input $w$ so that it spans all of its first
    $T$ steps (simply by applying the transition function of $C$), then, for
    every $t \in \tau$, the $t$-th line in the diagram contains only accepting
    cells.}
    \[
      L(\tau,p)
      = \{ w \in \Sigma^+ \mid
        \Pr[\text{$C$ accepts $w$ in every step $t \in \tau$}] \ge p \}
      = \LLin_\ell(\pi,\sigma,\alpha,\theta).
    \]
    This is because the bound on critical cells of \cref{lem_critical_cells}
    holds for all steps where $C$ accepts with non-zero probability and, in
    addition, as defined above $m$ already gives enough context to check if the
    respective critical cells also accept in any previous step.
    (That is, we construct $m$ as above by using $t = \max \tau$; however, since
    the sets of critical cells for different time steps may not be identical, we
    must also relax the condition for the $r_i$ so that $r_i$ need only be a
    critical cell in at least one of the time steps of $\tau$.)
    Since $\LTT$ is closed under complement, we then also have
    \[
      \overline{L(\tau,p)}
      = \{ w \in \Sigma^+ \mid
        \Pr[\text{$C$ accepts $w$ in every step $t \in \tau$}] < p \}
      \in \LTT.
    \]

    Fix some input word $w \in \Sigma^+$ to $C$.
    For $\varnothing \neq \tau \subseteq [T]$, let $Z_\tau$ denote the event
    where $C$ accepts $w$ in every step $t \in \tau$.
    By the inclusion-exclusion principle, we have
    \[
      \Pr[\text{$C$ accepts $w$}]
      = \Pr\left[ \exists t \in [T]: Z_{\{t\}} \right]
      = \sum_{\stackrel{\tau \subseteq [T]}{\abs{\tau} = 1}} \Pr[Z_\tau]
        - \sum_{\stackrel{\tau \subseteq [T]}{\abs{\tau} = 2}} \Pr[Z_\tau]
        + \cdots
        + (-1)^{T+1} \Pr[Z_{[T]}]
    \]
    (where the probabilities are taken over the coin tosses of $C$).
    This means that, if we are somehow given values for $p(\tau) = \Pr[Z_\tau]$
    so that the sum above is at least $2/3$, then we can intersect a finite
    number of $L(\tau,p(\tau))$ languages and their complements and obtain some
    language that is guaranteed to contain only words in $L(C)$.
    Concretely, let $p(\tau) \ge 0$ for every $\varnothing \neq \tau \subseteq
    [T]$ be given so that
    \[
      \sum_{\stackrel{\varnothing \neq \tau \subseteq [T]}
          {\text{$\abs{\tau}$ odd}}}
        p(\tau)
      - \sum_{\stackrel{\varnothing \neq \tau \subseteq [T]}
          {\text{$\abs{\tau}$ even}}}
        p(\tau)
      \ge \frac{2}{3}.
    \]
    Let
    \[
      \mathcal{T}_{\textrm{odd}} = \{ \tau \subseteq [T]
        \mid \tau \neq \varnothing, \, \text{$\abs{\tau}$ odd}, \,
        p(\tau) > 0 \}
    \]
    and similarly
    \[
      \mathcal{T}_{\textrm{even}} = \{ \tau \subseteq [T]
        \mid \tau \neq \varnothing, \, \text{$\abs{\tau}$ even}, \,
        p(\tau) > 0 \}.
    \]
    Then necessarily
    \[
      L(p) = \left(
        \bigcap_{\tau \in \mathcal{T}_\textrm{odd}} L(\tau,p(\tau))
      \right)
      \cap \left(
        \bigcap_{\tau \in \mathcal{T}_\textrm{even}} \overline{L(\tau,p(\tau))}
      \right)
      \subseteq L(C)
    \]
    contains every $w \in L(C)$ for which $\Pr[Z_\tau] \ge p(\tau)$ for $\tau
    \in \mathcal{T}_{\mathrm{odd}}$ and $\Pr[Z_\tau] \le p(\tau)$ for $\tau \in
    \mathcal{T}_{\mathrm{even}}$.
 
    The key observation is that \emph{there are only finitely many values the
    $\Pr[Z_\tau]$ may assume}.
    This is because $Z_\tau$ only depends on a finite number of coin tosses,
    namely the ones in the lightcones of the cells that are critical in at least
    one of the steps in $\tau$ (which, again, is finite due to
    \cref{lem_critical_cells}).
    Hence, letting $P$ denote the set of all possible mappings of the $\tau$
    subsets to these values, we may write
    \[
      L(C) = \bigcup_{p \in P} L(p) \in \LTT.
    \]

    \proofsubparagraph{Strictness of inclusion.}
    The final statement left to prove is that the inclusion just proven is
    proper.
    This is comparatively much simpler to prove.
    We show that the language
    \[
      L = \{ w \in \binalph^+ \mid \abs{w}_1 \ge 2 \} \in \LTT
    \]
    cannot be accepted by two-sided error PACA in constant time.
    
    For the sake of argument, assume there is such a PACA $C$ with time
    complexity $T \in \N_+$.
    Consider which cells in $C$ are critical based on their initial local
    configuration.
    Certainly a cell with an all-zeroes configuration $0^{2T-1}$ cannot be
    critical.
    Since $0^n10^n \notin L(C)$ for any $n$ (but $0^n10^n1 \in L(C)$), there
    must be $m_1,m_2$ so that $m_1 + m_2 = 2T-2$ and $c = 0^{m_1}10^{m_2}$ is
    the initial local configuration of a critical cell.
    This means that in
    \[
      x = 0^{2T} (c 0^{2T})^{T2^T} \in L
    \]
    we have at least $2^T$ cells in $x$ that are critical for the same time step
    $t \in [T]$ (by an averaging argument) and that are also all independent
    from one another (by \cref{lem_independence}).
    In turn, this implies the following, which contradicts $x \in L(C)$:
    \[
      \Pr[\text{$C$ accepts $x$}] \le \left( 1 - 2^{-T} \right)^{2^T}
      < \frac{1}{e} < \frac{2}{3}.
      \qedhere
    \]
  \end{description}
\end{proof}


\section{The General Sublinear-Time Case}
\label{sec_sublinear_time}

Recall we say a DACA $C$ is \emph{equivalent} to a PACA $C'$ if $L(C) = L(C')$.
In this section, we recall and briefly discuss:

\restatethmPACAEqACA*
\vspace{\topsep} 

To obtain \cref{thm_PACA_eq_ACA}, we prove the following more general result:

\begin{proposition}[restate=restatePropPACAInclusionACA,label=]%
  \label{prop_PACA_inclusion_ACA}%
  There is a constant $c>0$ such that the following holds:
  Let monotone functions $T,T',h,p\colon \N_+ \to \N_+$ be given with $h(n) \le
  2^n$, $p(n) = \poly(n)$, and $p(n) \ge n$ and such that, for every $n$ and
  $p'(n) = \Theta(p(n) \log h(n))$,
  \[
    T(6h(n)p(n)) \ge c p'(n).
  \]
  In addition, for $n$ given in unary, let the binary representation of $h(n)$
  and $p'(n)$ be computable in $O(p'(n))$ time by a Turing machine and, for $N$
  given in unary, let $T'(N)$ be computable in $O(T'(N))$ time by a Turing
  machine.
  Furthermore, suppose that, for every $T$-time one-sided error PACA $C$, there
  is a $T'$-time DACA $C'$ such that $L(C) = L(C')$.
  Then
  \[
    \RTIME[p(n)] \subseteq
    \TIME[h(n) \, \cdot \, T'(h(n)\!\cdot\!\poly(n)) \, \cdot \, \poly(n)].
  \]
\end{proposition}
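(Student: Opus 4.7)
The plan is to reduce deciding $L \in \RTIME[p(n)]$ to deterministic simulation of a PACA that has been derandomized via the hypothesis. Fix an $\RP$ Turing machine $M$ deciding $L$ in time $p(n)$ and using at most $p(n)$ random bits. Given $x \in \Sigma^n$, I would first compute a padded input $x' \in (\Sigma')^N$ of length $N = 6h(n)p(n)$ consisting of $6h(n)$ consecutive \emph{blocks} of $p(n)$ cells each, where every block encodes a copy of $x$ and is framed by distinguished boundary markers. The parameters $h(n)$ and $p'(n)$ are computable in $O(p'(n))$ time by hypothesis, and $p(n) = \poly(n)$ is computable in $\poly(n)$ time since $M$ is given, so $x'$ is producible in time polynomial in $N$.

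I would then construct a one-sided error PACA $C$ that, on input $x'$, behaves as follows. Each cell uses $O(p(n))$ steps to identify its position within its block via signals from the adjacent boundary markers. Then the $p(n)$ cells of each block execute the standard TM-to-CA simulation of $M(x,\cdot)$, with $p(n)$ fresh random bits aggregated from the cells' own coin tosses playing the role of $M$'s random tape; each such simulation takes $O(p(n))$ steps. The block repeats the simulation $k = \Theta(\log h(n))$ times with independent randomness, and as soon as any simulation accepts, every cell of the block enters a \emph{sticky} accepting state. The resulting time complexity is $O(p(n)\log h(n)) = O(p'(n))$, so choosing the constant $c$ in the hypothesis large enough guarantees $O(p'(n)) \le T(N)$ and hence that $C$ has time complexity bounded by $T$. (On inputs that do not have the form of a valid padding, $C$ simply rejects immediately.)

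For correctness: if $x \notin L$, then $M(x,\cdot) \equiv 0$, so no cell of $C$ ever accepts and $\Pr[C(x',\cdot) = 1] = 0$; if $x \in L$, each simulation accepts independently with probability $\ge 1/2$, so a block fails to produce any accepting run with probability at most $2^{-k} \le 1/(12h(n))$, and a union bound over the $6h(n)$ blocks yields $\Pr[C(x',\cdot) = 1] \ge 1/2$. Thus $C$ is one-sided error and the hypothesis supplies an equivalent $T'$-time DACA $C'$ with $L(C') = L(C)$, so $x \in L$ iff $x' \in L(C')$. Deciding membership in $L$ deterministically then reduces to producing $x'$ and simulating $C'$ step-by-step in total time $O(N \cdot T'(N)) = h(n) \cdot \poly(n) \cdot T'(h(n) \cdot \poly(n))$, using that $T'(N)$ is itself computable in $O(T'(N))$ time.

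The main obstacle is the intra-block bookkeeping of $C$: each PACA cell initially sees only its own input symbol and a stream of its own coin tosses, so I would need to verify that the boundary markers provide enough local structure for the cells to synchronize the TM-to-CA simulation, harvest $p(n)$ fresh random bits per simulation, reset between the $k$ successive runs, and maintain sticky acceptance---all within the tight budget $O(p'(n))$. A secondary concern is carefully accounting for constant factors so that the constant $c$ in the hypothesis is indeed large enough to absorb the overhead of these subroutines.
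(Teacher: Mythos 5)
Your overall strategy (pad the input into blocks, have each block simulate the $\RP$ machine using the cells' coin tosses, invoke the derandomization hypothesis, then simulate the resulting DACA deterministically) matches the paper's, and the intra-block bookkeeping you flag as the main obstacle is not where the difficulty lies. The genuine gap is in the padding itself: your $6h(n)$ blocks are \emph{identical} copies of $x$ framed by markers, and a sublinear-time PACA cannot verify how many such blocks its input contains, because every cell's local view in a string of $m$ identical blocks is indistinguishable from its view in a string of $m'$ identical blocks. Consequently your claim that ``on inputs that do not have the form of a valid padding, $C$ simply rejects immediately'' fails, and this is fatal in two ways. First, an input consisting of, say, two valid blocks has length $N'' \approx 2p(n)$ but is still accepted by $C$ only after $\Theta(p(n)\log h(n))$ steps; the hypothesis only gives $T(6h(n)p(n)) \ge c\,p'(n)$, and for the intended parameter regimes (e.g.\ $T = \polylog$ and $h(n) = 2^{n^{\eps}/(d+1)}$) one has $T(N'') \ll p'(n)$, so $C$ is not a $T$-time PACA and the derandomization hypothesis cannot be applied to it at all. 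Second, an input with $m \gg h(n)$ valid blocks is accepted with probability $(1-\Theta(1/h(n)))^{m}$, which lies strictly between $0$ and $1/2$; this violates the definition of a one-sided error PACA, which requires every input to be accepted with probability either at least $1/2$ or exactly $0$.

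The paper closes exactly this gap by numbering the blocks: block $i$ carries the component $\bin_n(i)$, adjacent blocks check that their numbers are consecutive, and the leftmost and rightmost blocks check that they carry $\bin_n(0)$ and $\bin_n(h(n)-1)$ respectively (with $h(n)$ computed from $n$ in $O(p'(n))$ time, as the hypothesis permits). Any input admitting an accepting computation then has exactly $h(n)$ blocks, so both the input length (and hence the available time budget $T(N)$) and the union bound over exactly $h(n)$ independent simulations are under control. A minor further difference is that the paper performs the error reduction on the Turing machine itself, amplifying $R$ to error below $1/(2h(n))$ before the simulation, rather than repeating the simulation $\Theta(\log h(n))$ times inside each block; both variants cost $\Theta(p(n)\log h(n))$ time, and this choice is not where the argument stands or falls.
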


The first item of \cref{thm_PACA_eq_ACA} is obtained by setting (say) $T(n) =
n^\eps$, $T'(n) = n^d$, and $h(n) = p(n)^{2(1/\eps-1)}$ (assuming $\eps < 1$).
For the second one, letting $p(n) = n^a$ where $a > 0$ is arbitrary and (again)
$T'(n) = n^d$, set $T(n) = (\log n)^{2+a/\eps}$ and $h(n) = 2^{n^\eps/(d+1)}$.
Finally, for the last one, letting again $p(n) = n^a$ and $T'(n) = n^d$, set
$T(n) = (\log n)^b$ and $h(n) = 2^{n^c}$.

At the core of the proof of \cref{prop_PACA_inclusion_ACA} is a padding
argument.
Nevertheless, we cannot stress enough that the padding itself is highly
nontrivial.
In particular, it requires a clever implementation that ensures it can be
verified \emph{in parallel} and also \emph{without initial knowledge of the
input length}.
To see why this is so, observe that, if we simply use a \enquote{standard} form
of padding where we map $x \in \binalph^+$ to $x' = x 0^{p(\abs{x})}$ (where
$p\colon \N_+ \to \N_0$ gives the desired padding length), then it is impossible
for the automaton to distinguish between this and, say, $x'' = x
0^{p(\abs{x})/2}$ in $o(p)$ time (assuming, e.g., $p(\abs{x}) =
\Omega(\abs{x})$).
The reason for this is that, since cells are initially completely unaware of
their position in the input, the cells with an all-zeroes neighborhood must
behave exactly the same.
More specifically, we can use an argument as in the proof of
\cref{thm_one_vs_twosided} to show that the automaton must behave the same on
both $x'$ and $x''$ (in the sense that it accepts the one if and only if it
accepts the other) unless it \enquote{looks at the whole input} (i.e., unless it
has $\Omega(p(\abs{x})$ time complexity).

The padding technique we use can be traced back to
\cite{ibarra85_fast_tcs}.
In a nutshell, we split the input into blocks of the same size that redundantly
encode the input length in a locally verifiable way.
More importantly, the blocks are numbered from left to right in ascending order,
which also allows us to verify that we have the number of blocks that we need.
This is crucial in order to ensure the input is \enquote{long enough} and the
PACA achieves the time complexity that we desire (as a function of the input
length).

\begin{proof}
  Let $L \in \RP$ be decided by an $\RP$ machine $R$ whose running time is
  upper-bounded by $p$.
  Without restriction, we assume $p(n) \ge n$.
  Using standard error reduction in $\RP$, there is then an $\RP$ machine $R'$
  with running time $p'(n) = \Theta(p(n) \log h(n))$ (i.e., polynomial in $n$),
  space complexity at most $p(n)$, and which errs on $x \in L$ with probability
  strictly less than $1/2h(n)$.
  Based on $L$ we define the language
  \[
    L' = \{ \bin_n(0) \# x_0 \# 0^{p(n)}
      \% \cdots \%
      \bin_n(h(n)-1) \# x_{h(n)-1} \# 0^{p(n)}
      \mid n \in \N_+, x_i \in L \cap \Sigma^n \},
  \]
  where $\bin_n(i)$ denotes the $n$-bit representation of $i < 2^n$.
  Note the length of an instance of $L'$ is $N \le 6h(n)p(n) = O(h(n)
  \poly(n))$.

  We claim there is $c>0$ such that $L'$ can be accepted in at most $c p'(n) =
  O(p'(n))$ (and in particular less than $T(N)$) time by a one-sided error PACA
  $C$.
  The construction is relatively straightforward:
  We refer to each group of cells $\bin_n(i)\#x_i\#0^{p'(n)}$ separated by the
  $\%$ symbols as a \emph{block} and the three binary strings in each block
  (separated by the $\#$ symbols) as its \emph{components}.
  First each block $a_1\#a_2\#a_3$ checks that its components have correct
  sizes, that is, that $\abs{a_1} = \abs{a_2}$ and $\abs{a_3} = p(\abs{a_1})$.
  Then the block communicates with its right neighbor $b_1\#b_2\#b_3$ (if it
  exists) and checks that $\abs{a_i} = \abs{b_i}$ for every $i$ and that, if
  $a_1 = \bin_n(j)$, then $b_1 = \bin_n(j+1)$.
  In addition, the leftmost block checks that its first component is equal to
  $\bin_n(0)$; similarly, the rightmost block computes $h(n)$ (in $O(p'(n))$
  time) and checks that its first component is equal to $\bin_n(h(n)-1)$.
  Following these initial checks, each block then simulates $R'$ (using bits
  from its random input as needed) on the input given in its second component
  using its third component as the tape.
  If $R'$ accepts, then all cells in the respective block turn accepting.
  In addition, the delimiter $\%$ is always accepting unless it is a border
  cell.

  Clearly $C$ accepts if and only if its input is correctly formatted and $R'$
  accepts every one of the $x_i$ (conditioned on the coin tosses that are chosen
  for it by the respective cells of $C$).
  Using a union bound, the probability that $C$ errs on an input $x \in L'$ is
  \[
    \Pr[C(x,U_{T \times n}) = 0]
    \le \sum_{i=0}^{h(n)-1} \Pr[R(x_i) = 0]
    < \sum_{i=0}^{h(n)-1} \frac{1}{2h(n)}
    = \frac{1}{2}.
  \]
  In addition, the total running time of $C$ is the time needed for the
  syntactic checks (requiring $O(p'(n))$ time), plus the time spent simulating
  $R'$ (again, $O(p'(n))$ time using standard simulation techniques).
  Hence, we can implement $C$ so that it runs in at most $c p'(n)$ time
  for some constant $c>0$, as desired.

  Now suppose there is a DACA $C'$ equivalent to $C$ as in the statement of the
  theorem.
  We shall show there is a deterministic (single-tape) Turing machine that
  decides $L$ with the purported time complexity.
  Consider namely the machine $S$ which, on an input $x \in \binalph^n$ of $L$,
  produces the input
  \[
    x' = \bin_n(0) \# x \# 0^{p(n)} \% \cdots \% \bin_n(h(n)-1) \# x \# 0^{p(n)}
  \]
  of $L'$ and then simulates $C'$ on $x'$ for $T'(N)$ steps, accepting if and
  only if $C'$ does.
  Producing $x'$ from $x$ requires $O(N \cdot \poly(n))$ time since we need only
  copy $O(n)$ bits from each block separated by the $\%$ delimiters to the next
  (namely the string $x$ and the number of the previous block).
  Using the standard simulation of cellular automata by Turing machines, the
  subsequent simulation of $C'$ requires $O(N \cdot T'(N))$ time.
  Checking whether $C'$ accepts or not can be performed in parallel to the
  simulation and requires no additional time.
  Hence, the time complexity of $S$ is
  \[
    O(N \cdot \poly(n) + N \cdot T'(N)) = O(h(n) \cdot \poly(n) \cdot T'(h(n)
    \cdot \poly(n))).
    \qedhere
  \]
\end{proof}


\section{Further Directions}
\label{sec_further_directions}

\subparagraph{$\LLT$ and two-sided error PACA.}
Besides giving a separation between one- and two-sided error,
\cref{thm_char_twosided_PACA} considerably narrows down the position of the
class of languages accepted by constant-time two-sided error PACA in the
subregular hierarchy.
Nevertheless, even though we now know the class is \enquote{sandwiched}
in-between $\LLT_{\cup\cap}$ and $\LTT$, we still do not have a precise
characterization for it.
It is challenging to tighten the inclusion from \cref{thm_char_twosided_PACA} 
because the strategy we follow relies on closure under complement, but (as we
also prove) the class of two-sided error PACA is \emph{not} closed under
complement.
It appears that clarifying the relation between said class and $\LLT_{\cup\cap}$
as well as $\LLT_{\cup\cap}$ itself and $\LLT_{\cup}$ or also $\LT$ may give
a \enquote{hint} on how to proceed.

\subparagraph{The general sublinear-time case.}
\cref{thm_PACA_eq_ACA} indicates that even polylogarithmic-time PACA can
recognize languages for which no deterministic polynomial-time algorithm is
currently known.
Although the proof of \cref{prop_PACA_inclusion_ACA} does yield explicit
examples of such languages, they are rather unsatisfactory since in order to
accept them we do not need the full capabilities of the PACA model.
(In particular, communication between blocks of cells is only required to check
certain syntactic properties of the input; once this is done, the blocks operate
independently from one another.)
It would be very interesting to identify languages where the capabilities of the
PACA model are put to more extensive use.

\subparagraph{Pseudorandom generators.}
From the opposite direction, to investigate the limitations of the PACA model,
one possibility would be to construct \emph{pseudorandom generators} (PRGs) that
fool sublinear-time PACAs.
Informally, such a PRG is a function $G\colon \binalph^{s(n)} \to
\binalph^{r(n)}$ with $s(n) \ll r(n)$ and having the property that a PACA (under
given time constraints) is incapable of distinguishing $G(x)$ from uniform when
the seed $x$ is chosen uniformly at random.
PRGs have found several applications in complexity theory (see, e.g.,
\cite{vadhan12_pseudorandomness_book} for an introduction).

\cref{thm_PACA_eq_ACA} suggests that an unconditional time-efficient
derandomization of PACAs is beyond reach of current techniques, so perhaps
\emph{space-efficient} derandomization should be considered instead.
Indeed, as a PACA can be simulated by a space-efficient machine (e.g., by
adapting the algorithm from \cite{modanese21_lower_csr}), it is possible to
recast PRGs that fool space-bounded machines (e.g.,
\cite{nisan92_pseudorandom_comb, hoza20_simple_siamjc}) as PRGs that fool PACAs.
Nevertheless, we may expect to obtain even better constructions by exploiting
the locality of PACAs (which space-bounded machines do not suffer from).


\printbibliography


\appendix

\section{An Example for PACA Being More Efficient than DACA}
\label{appx_example_daca_vs_paca}

\begin{example}
\label{ex_daca_vs_paca}
Let $\Sigma = \{ 0,1,2,3 \}$ and consider the language
\[
  L = \{ 0^k1^l2^m3^n \mid \text{$k,l,m,n \in \N_0$ and
    (($l \ge 2$ and $m \ge 3$) or ($l \ge 3$ and $m \ge 2$))} \}.
\]
\begin{figure}
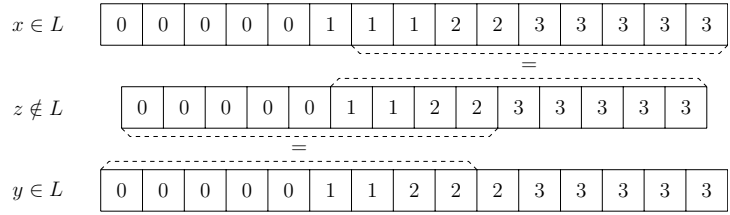

  \centering
  \includestandalone[scale=.55]{figures/ex_xyz}
  \caption{Comparing the words $x = 0^51^32^23^5 \in L$, $y = 0^51^22^33^5 \in
    L$, and $z = 0^51^22^23^5 \notin L$, we notice that every infix of length
    $5$ of $z$ appears in either $x$ or $y$.
    This implies there is no DACA that accepts $L$ with time complexity $3$ or
    less.}
  \label{fig_ex_xyz}
\end{figure}%
No DACA $C$ accepts $L$ in at most $3$ steps.
This can be shown using methods from \cite{modanese21_sublinear-time_ijfcs,
sommerhalder83_parallel_ai, kim90_characterization_pd}.
Given a DACA $C$ with time complexity $3$, we can determine if $C$ accepts a
word $x \in \Sigma^\ast$ by looking only at the infixes of length $5$ (and the
prefix and suffix of length $4$) of $x$.
Consider the words $x = 0^51^32^23^5 \in L$, $y = 0^51^22^33^5 \in L$, and $z =
0^51^22^23^5 \notin L$ (\cref{fig_ex_xyz}).
Then every infix of length $5$ (and the prefix and suffix of length $4$) of $z$
appears in $x$ except for the infixes $00112$ and $01122$, which both appear in
$y$.
It follows that, if $x,y \in L(C)$, then $C$ must also accept $z$, which proves
there is no DACA for $L$ with time complexity (at most) $3$.

Nevertheless, there is a $3$-time one-sided $7/8$-error PACA $C'$ for $L$.
Checking that the input $x = 0^k1^l2^m3^n$ is such that ($x$ is of the form
$0^\ast1^\ast2^\ast3^\ast$ and) $l,m \ge 2$ can be done without need of
randomness simply by looking at the infixes of length $5$ of $x$:
Every cell collects the infix $m$ that corresponds to its position in the input
and rejects if $m$ is disallowed.
(We refer to \cite{modanese21_sublinear-time_ijfcs, sommerhalder83_parallel_ai,
kim90_characterization_pd} for the general method.)
This procedure is carried out in parallel to the one we describe next (and a
cell turns accepting if and only if it both procedures dictate it to do so).

Now we use randomness to check that one of $m \ge 3$ and $l \ge 3$ holds.
In time step $1$, every cell exposes its coin toss of step $0$ so that its
neighbors can read it and use it to choose their state in step $2$.
Let $c_\sigma$ denote the leftmost cell in which $\sigma \in \Sigma$ appears,
and let $l_\sigma$ and $r_\sigma$ be the coin tosses of the left and right
neighbors of $c_\sigma$, respectively.
We have $c_1$ accept if and only if $r_1 = 1$, $c_3$ if and only if $l_3 = 1$,
and $c_2$ if and only if $l_2 + r_2 < 2$.
All other cells accept regardless of the coin tosses they see (as long as $x$
satisfies the conditions we specified above).

For $i \in \{1,2,3\}$, let $A_i$ denote the event of cell $c_i$ accepting.
The above results in the following behavior:
If $l = m = 2$, we have $r_1 = l_2$ and $r_2 = l_3$ since the coin tosses belong
to the same cells, in which case $C'$ never accepts.
If $l \ge 3$ and $r_2$ and $l_3$ belong to the same cell (i.e., $r_2 = l_3$),
then $r_1$ and $l_2$ do not belong to the same cell and we have
\[
  \Pr[C(x,U_{T \times \abs{x}}) = 1]
  = \Pr[A_1] \Pr[A_2 \land A_3]
  = \Pr[r_1 = 1] \Pr[l_2 = 0 \land r_2 = l_3 = 1]
  = \frac{1}{8}.
\]
The case $m \ge 3$ and $r_1$ and $l_2$ belonging to the same cell is similar.
Finally, if $l \ge 3$ and $m \ge 3$, the values $r_1$, $l_2$, $r_2$, and $l_3$
are all independent and we have
\[
  \Pr[C(x,U_{T \times \abs{x}}) = 1]
  = \prod_{i=1}^3 \Pr[A_i]
  = \Pr[r_1 = 1] \Pr[l_2 + r_2 < 2] \Pr[l_3 = 1]
  > \frac{1}{8}.
\]
\end{example}

\end{document}